\newcommand{\Var}{\text{Var }}
\newcommand{\vc}[0]{\text{vec}}
\newcommand{\argmin}[1]{\underset{#1}{\operatorname{argmin}}\text{ }}
\newcommand{\MISE}{\text{MISE}}
\newcommand{\w}{\alpha}
\newcommand{\page}{p.}
\newcommand{\hsq}{s}
\newcommand{\no}{NO$_2$ }
\newcommand{\fyh}{\widetilde{f}_{Y,H}}
\newcommand{\fyhy}[2]{\widetilde{f}_{Y,#1}(#2)}
\newcommand{\fy}{\fyhy{H}{y}}
\newcommand{\fxhx}[2]{\widetilde{f}_{X,#1}(#2)}
\newcommand{\ffyh}{\, \widetilde{\! \! \widehat{f}}_{Y,H}}
\newcommand{\ffy}{\, \, \widetilde{\! \! \widehat{f}}_{Y,H}(\omega)}
\newcommand{\ffx}{\, \, \widetilde{\! \! \widehat{f}}_{X}(\omega)}
\newtheorem{Thm}{\underline{\bf Theorem}}
\newtheorem{Assume}{\underline{\bf Assumptions}}
\newtheorem{Lem}{\underline{\bf Lemma}}
\def\E{\mathbb{E}}
\begin{document}
\thispagestyle{empty}
\baselineskip=28pt
\begin{center}
{\LARGE{\bf Kernel Density Estimation with Berkson Error}}
\end{center}

\baselineskip=12pt

\vskip 2mm
\begin{center}
James P. Long\\
Department of Statistics, Texas A\&M University\\
3143 TAMU, College Station, TX 77843-3143\\
jlong@stat.tamu.edu\\
\hskip 5mm \\
Noureddine El Karoui \\
Department of Statistics, University of California, Berkeley\\
367 Evans Hall \# 3860, Berkeley, CA 94720-3860\\
nkaroui@stat.berkeley.edu\\
\hskip 5mm \\
John A. Rice \\
Department of Statistics, University of California, Berkeley\\
367 Evans Hall \# 3860, Berkeley, CA 94720-3860\\
rice@stat.berkeley.edu
\end{center}

\hskip 5mm

\begin{center}
{\Large{\bf Abstract}}
\end{center}
\baselineskip=12pt
Given a sample $\{X_i\}_{i=1}^n$ from $f_X$, we construct kernel density estimators for $f_Y$, the convolution of $f_X$ with a known error density $f_{\epsilon}$. This problem is known as density estimation with Berkson error and has applications in epidemiology and astronomy. Little is understood about bandwidth selection for Berkson density estimation. We compare three approaches to selecting the bandwidth both asymptotically, using large sample approximations to the $\MISE$, and at finite samples, using simulations. Our results highlight the relationship between the structure of the error $f_{\epsilon}$ and the optimal bandwidth. In particular, the results demonstrate the importance of smoothing when the error term $f_{\epsilon}$ is concentrated near 0. We propose a data--driven bandwidth estimator and test its performance on \no exposure data.

\baselineskip=12pt
\par\vfill\noindent
\underline{\bf Keywords}:
Berkson Error;
Measurement Error;
Bandwidth Selection;
Kernel Density Estimation;
Multivariate Density Estimation

\par\medskip\noindent
\underline{\bf Short title}: Kernel Density Estimation with Berkson Error

\clearpage\pagebreak\newpage
\pagenumbering{arabic}
\newlength{\gnat}
\setlength{\gnat}{22pt}
\baselineskip=\gnat
\section{Introduction} \label{sec:seca}
\label{sec:introduction}

\subsection{Background}
We consider smoothing a density estimate when an error--free sample is observed and one is interested in the convolution of the population density with an error term. This is known as density estimation with Berkson error and has been studied in \cite{delaigle2007nonparametric} where $NO_2$ exposure in children is estimated using known kitchen and bedroom concentrations. The exposure level in children is modeled as a function of kitchen and bedroom concentrations plus independent, additive random error.

Density estimation with Berkson error is one example of a class of problems where low--error or error--free data is used to construct an estimate for noisy data. \cite{bovy2011think} considers this problem in the context of constructing a classifier for noisy astronomical data. Here, each object belongs to the class quasar or star. For each object a telescope records a vector of flux ratios. For a training set of observations of known class, the authors observe low--error flux ratios. However for the data of unknown class, flux ratios contain significant measurement error. The goal is to construct an accurate classifier for the noisy data.\footnote{See Section 2 (Equations 1, 2, and 3) and Section 5 of \cite{bovy2011think} for more information.} \cite{carroll2009nonparametric} considers a similar problem in a regression context with data arising from nutritional epidemiology. \cite{long2012optimizing} studies this problem in the context of classification of periodic variable stars.

In each of these works, tuning parameters are selected to optimize some risk function. There is extensive literature on selecting tuning parameters for problems where all data is observed without measurement error. For example with kernel density estimation, asymptotic rates for the mean integrated squared error ($\MISE$) as a function of bandwidth as well as finite--sample procedures for selecting the bandwidth are known \citep{silverman1986density,jones1996brief}. In contrast, much less is understood about selection of tuning parameters when one set of data is error--free and there is measurement error in the variable of interest. In this work, we derive asymptotic results and present finite sample simulations illustrating the relationship between measurement error and smoothing for kernel density estimators. While our results are most directly applicable to density estimation with Berkson error, they have implications for the classification and regression problems above.

We now formalize the density estimation problem. Suppose we observe independent $\{X_i\}_{i=1}^n \sim f_X$. We seek to use this data to estimate the density, denoted $f_Y$, of $Y = X + \epsilon$. Here $\epsilon \sim f_{\epsilon}$, $X \sim f_{X}$, and $\epsilon$ and $X$ are independent. All random variables are in $\mathbb{R}^p$.  In the literature, $\epsilon$ is known as Berkson error and was introduced in a regression context by \cite{berkson1950there}. It differs from classical measurement error where one observes an error contaminated sample and seeks to estimate the underlying, uncontaminated density.\footnote{See \cite{carroll2006measurement} for a review of Berkson and classical measurement error.}

For estimating $f_Y$, \cite{delaigle2007nonparametric} proposed using
\begin{equation}
\label{eq:delaigle}
\widetilde{f}_Y(y) = \frac{1}{n}\sum_{i=1}^n f_{\epsilon}(y - X_i).
\end{equation}
\cite{delaigle2007nonparametric} showed that when $f_{\epsilon}$ is square--integrable and $f_X$ is bounded, this estimator is unbiased with a mean integrated squared error (MISE) that converges to 0 at rate $n$. The convergence result contrasts with standard density estimation where the $\MISE$ is generally of order $n^{-4/(4 + p)}$.\footnote{The $n^{-4/(4+p)}$ order for the $\MISE$ requires regularity conditions on $f_Y$. For example, \cite{wand1995kernel} (Section 4.3, \page 95) assumes each entry of the Hessian of $f_Y$ is piecewise continuous and square integrable. See \page 100 of \cite{wand1995kernel} for the $\MISE$ convergence rate.} Effectively, knowledge of $f_{\epsilon}$ provides valuable information about the structure of $f_Y$ unavailable in the standard kernel density estimation case. In addition to a fast convergence rate, $\widetilde{f}_Y$ in Equation \ref{eq:delaigle} has no tuning parameters that require estimation. A potential drawback of $\widetilde{f}_{Y}$ in Equation \ref{eq:delaigle} is that for cases where $\epsilon$ is concentrated around $0$, the estimator will have large spikes at the sample points $\{X_i\}_{i=1}^n$ and thus high variance. We illustrate this problem through simulations in Section \ref{sec:finite_sample}.

In this work we study of impact of smoothing estimates of $f_Y$ using kernels. We focus on comparing three approaches to kernel bandwidth selection:
\begin{itemize}
\item \textbf{Approach 1:} Select a bandwidth specifically to optimize estimation of $f_Y$.
\item \textbf{Approach 2:} Since $f_Y = \int f_X(y-\epsilon)f_{\epsilon}(\epsilon)d\epsilon$, use a kernel density estimator to estimate $f_X$. Select the bandwidth to optimize estimation of $f_X$. Then convolve this estimate of $f_X$ with $f_{\epsilon}$ in order to estimate $f_Y$.
\item \textbf{Approach 3:} Set the bandwidth to $0$ i.e., use Delaigle's estimator in Equation \eqref{eq:delaigle}.
\end{itemize}
Each of these approaches has some attractive properties. Approach 1 may provide the optimal performance because the bandwidth is chosen specifically to estimate $f_Y$. Approach 2 is attractive because there is extensive literature on selecting a bandwidth which optimizes estimation of $f_X$. Approach 3 is attractive because there are no tuning parameters to estimate and the resulting estimator has parametric first order convergence rates as shown by \cite{delaigle2007nonparametric}.

In addition to shedding light on the Berkson density estimation problem, a comparison of the performance of these approaches may be useful when considering how to regularize classification or regression methods when training data is error--free (or low--error) and data of unknown class has measurement error.

\subsection{Outline of Work and Summary of Findings}

In Section \ref{sec:setup} we present a kernel bandwidth estimator for $f_Y$ that allows simultaneous comparison of all three smoothing approaches. As a guide towards comparing the approaches, in Section \ref{sec:asymptotics} we derive a second order expansion of the $\MISE$ of the estimator as a function of bandwidth. The asymptotic expansion reveals interesting properties:
\begin{itemize}
\item Asymptotically, the optimal bandwidth for estimating $f_Y$ (approach 1) is of order $n^{-1/2}$. Notably this rate does not depend on the dimension of the problem, unlike for standard kernel density estimation. Approach 1 provides a second order ($n^{-2}$) reduction in $\MISE$ over approach 3 (no smoothing).
\item The asymptotically optimal bandwidth for estimating $f_Y$ depends on $f_X$ which is unknown. The expression we derive for the optimal bandwidth is used as the basis for a plug--in estimator in Section \ref{sec:real_data}.
\item Using approach 2 (smoothing to optimize estimation of $f_X$) results in a $\MISE$ of order $n^{-4/(4+p)}$, slower than the $n^{-1}$ order of approaches 1 and 3. As the dimension increases, the discrepancy in these rates grows.
\end{itemize}

Based purely on asymptotic considerations, approaches 1 and 3 appear superior to approach 2. In Section \ref{sec:finite_sample} we study the finite sample properties of these three approaches by adapting a result from \cite{wand1993comparison} which allows for exact computation of $\MISE$ when $f_X$ is a mixture of normals and the error and kernel are normal. We find:
\begin{itemize}
\item Approach 3 (no smoothing) can result in drastically undersmoothing the density, particularly when the error term is concentrated around $0$.
\item Approach 2 (smoothing to optimize estimation of $f_X$) can result in oversmoothing the density, particularly when the error term is smooth \ref{sec:asymptotics}.
\item In the cases considered in the simulation, the qualitative impacts of undersmoothing by using approach 3 appear worse than the qualitative impacts of oversmoothing by using approach 2.
\item Approach 1 outperforms approach 2 or 3. In the simulations considered, this performance advantage increased in three dimensions versus one dimension.
\end{itemize}

In Section \ref{sec:real_data} we propose a data--based bandwidth estimator for approach 1 and apply our methodology to the \no data studied by \cite{delaigle2007nonparametric}. In Section \ref{sec:conclusions} suggest some directions for future work. Proofs of all theorems are given in Section \ref{sec:proofs} and some technical issues are addressed in Section \ref{sec:technical_notes}.

\section{Problem Setup}
\label{sec:setup}
We observe independent random variables $X_1, \ldots, X_n \sim f_X$. We aim to estimate, $f_Y$, the density of
\begin{equation*}
Y = X + \epsilon.
\end{equation*}
Here $X\sim f_X$, $\epsilon \sim f_{\epsilon}$, and $X$ and $\epsilon$ are independent. $f_{\epsilon}$ is assumed known. All random variables are in $\mathbb{R}^p$. In all that follows let $\widehat{f}_V$ represent the characteristic function of the random variable $V$ and let $\widetilde{f}$ represent an estimator of $f$.

\subsection{Construction of an Estimator for $f_Y$}
We construct an estimator for $f_Y$ by first estimating $\widehat{f}_Y$, the characteristic function of $f_Y$. Let $K$ be a mean $0$ density function called the kernel, and $\widehat{K}$ its characteristic function. Let
\begin{equation*}
\Sigma_K = \int x x^T K(x)dx.
\end{equation*}
Let $H = H_n \succeq 0$ be a sequence of positive semidefinite $p \times p$ matrices called the bandwidth.
\begin{equation*}
\ffx = \frac{1}{n} \sum_{j=1}^n e^{i\omega^T X_j}
\end{equation*}
is an estimate of $\widehat{f}_X$. Consider estimating $\widehat{f}_Y$ (the characteristic function of $f_Y$) using
\begin{equation}
\label{eq:fourier_estimator}
\ffy = \widehat{K}(H\omega)\widehat{f}_{\epsilon}(\omega)\ffx.
\end{equation}
Note that $\ffyh$ is a characteristic function because it is the product of characteristic functions. Assuming $\ffyh \in L_1$, we may estimate $f_Y$ using
\begin{equation}
\label{eq:estimator}
\fy = \frac{1}{(2\pi)^p} \int e^{-i\omega^T y} \ffy d\omega. 
\end{equation}
The assumption that $\ffyh \in L_1$ implies $\fyh$ is a bounded density (see Theorem 3.3 in \cite{durrett2005probability}). Throughout this work, we require $\widehat{f}_{\epsilon} \in L_1$, thus guaranteeing that $\ffyh \in L_1$ and ensuring that $\fyh$ in Equation \eqref{eq:estimator} is a valid density.

\subsection{$\fyh$ and Approaches to Smoothing}
$\fyh$ allows for simultaneous comparison of all three bandwidth selection approaches discussed in the introduction. With approach 1, we optimize $H$ in $\fyh$ specifically for estimating $f_Y$. Recall that with approach 2, we construct a kernel density estimator for $f_X$ with some bandwidth $H_X \succ 0$ optimized for estimating $f_X$. Denote this estimator
\begin{equation*}
\fxhx{H_X}{x} = \frac{1}{n}\sum_{i=1}^n K_{H_X}(x - X_i).
\end{equation*}
Since $f_Y(\epsilon) = \int f_X(y - \epsilon)f_{\epsilon}(\epsilon)d\epsilon$, $\fxhx{H_X}{x}$ is convolved with $f_{\epsilon}$ to estimate $f_Y$. However this procedure is equivalent to using $H_X$ in $\fyh$. In other words,
\begin{equation*}
\fyhy{H_X}{y} = \int \fxhx{H_X}{y - \epsilon}f_{\epsilon}(\epsilon)d\epsilon.
\end{equation*}
Finally $\fyh$ includes as a subcase approach 3, no smoothing. By setting $H=0$ we have
\begin{equation*}
\fyhy{0}{y} = \frac{1}{n} \sum_{i=1}^n f_{\epsilon}(y - X_i).
\end{equation*}
This is the kernel--free estimator of \cite{delaigle2007nonparametric} presented in Equation \eqref{eq:delaigle}.

\subsection{MISE}

We use mean integrated squared error as a guide toward understanding the behavior of the three approaches to selecting the bandwidth in $\fyh$. Let $\mathcal{P}_n$ be the product measure on $(X_1, \ldots, X_n)$. The mean integrated squared error is defined as
\begin{equation*}
\MISE(H) \equiv \mathbf{E}_{\mathcal{P}_n}\int \left(\fy - f_Y(y)\right)^2 dy.
\end{equation*}
$\MISE$ is a popular measure of risk used in many density estimation studies (see for example \cite{jones1996brief,marron1992exact,delaigle2008alternative,wand1995kernel,tsybakov2009introduction}). The $\MISE$ allows for relatively straightforward asymptotic analysis (Theorems \ref{thm:char_mise} and \ref{thm:multi_dim}) as well as admitting to, under certain conditions, a computationally efficient representation which we exploit in order to obtain our finite sample results (Theorem \ref{thm:normal_mise}). We explore some of the qualitative impacts of the different smoothing approaches, not directly captured by $\MISE$, in Subsection \ref{sec:conf_bounds}. While other measures, such as K-L divergence, Hellinger distance, or integrated absolute error, could be used, we feel $\MISE$ captures the essential properties of the three approaches to smoothing.

The optimal $H$, in terms of minimizing $\MISE$ for estimating $f_Y$, is
\begin{equation*}
H_{Y} = \argmin{\{H: H \succeq 0\}} \MISE(H).
\end{equation*}
This is the bandwidth of approach 1. For approach 2, the bandwidth is chosen to minimize the $\MISE$ in estimating $f_X$. In other words
\begin{equation*}
H_X = \argmin{\{H: H \succ 0\}} \mathbf{E}_{\mathcal{P}_n}\int \left(\widetilde{f}_X(x) - f_X(x)\right)^2 dx.
\end{equation*}
Unfortunately the $\MISE$ expression is complicated and exact expressions for $H_Y$ and $H_X$ are not generally possible. In Section \ref{sec:asymptotics} we form asymptotic approximations to the $\MISE$ and determine the rates at which $||H_{Y}||_{\infty} \rightarrow 0$ and $\MISE(H_Y) \rightarrow 0$ as $n \rightarrow \infty$. We refer to standard kernel density theory for the rates associated with $H_X$. Our asymptotic expansion of the $\MISE$ reveals rates associated with $\MISE(H_X)$. The asymptotic approximation also shows the error for approach 3, $\MISE(0)$, a quantity that was derived by \cite{delaigle2007nonparametric}.

In Section \ref{sec:finite_sample} we specialize to the case where $f_X$ is a Gaussian mixture, $K$ is a Gaussian kernel, and $f_{\epsilon}$ is a Gaussian error density. In this setting, the $\MISE$ can be evaluated at a particular $H$ without numerically approximating integrals (see Theorem \ref{thm:normal_mise}). Using this result we study the finite sample properties of $H_Y$, $H_X$, $\MISE(H_Y)$, $\MISE(H_X)$, and $\MISE(0)$.

In this work, $H_Y$ refers to the exact optimal bandwidth, $H_Y^*$ refers to an asymptotically optimal bandwidth, and $\widetilde{H}_Y$ refers to a data based estimator of $H_Y$ (used mostly in Section \ref{sec:real_data}). While we derive asymptotic results for bandwidth matrices, we often specialize to the case of a scalar bandwidth. In such cases, lowercase letters, $h_Y$, $h_Y^*$, and $\widetilde{h}_Y$, are used. The same notation applies for $H_X$.

\section{Asymptotic Results}
\label{sec:asymptotics}

For the purposes of forming asymptotic expansions, we represent the $\MISE$ in terms of characteristic functions.
\begin{Thm}
\label{thm:char_mise}
Assume $\widehat{f}_Y, \, \ffyh \in L_1$. Then
\begin{equation}
\label{eq:misefourier}
(2\pi)^p \MISE(H) = \int |1 - \widehat{K}(H\omega)|^2 d\mu(\omega) + \frac{1}{n}\int |\widehat{K}(H\omega)|^2 d\nu(\omega)
\end{equation}
where
\begin{align*}
&d\mu(\omega) = |\widehat{f}_{\epsilon}(\omega)|^2|\widehat{f}_X(\omega)|^2 d\omega,\\
&d\nu(\omega) = |\widehat{f}_{\epsilon}(\omega)|^2 (1-|\widehat{f}_X(\omega)|^2) d\omega
\end{align*}
are positive measures.
\end{Thm}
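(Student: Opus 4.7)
The plan is to apply Plancherel's theorem to rewrite the $\MISE$ as a Fourier-side integral and then carry out a pointwise bias/variance decomposition of $\ffy - \widehat{f}_Y(\omega)$. The assumptions $\widehat{f}_Y,\,\ffyh \in L_1$ ensure (via Fourier inversion, i.e.\ Durrett's Theorem 3.3 already invoked in the excerpt) that both $f_Y$ and $\fy$ are bounded densities, hence lie in $L^1 \cap L^2$. Their difference is therefore in $L^2$, and Plancherel gives
$$(2\pi)^p \int (\fy - f_Y(y))^2 dy = \int \bigl|\ffy - \widehat{f}_Y(\omega)\bigr|^2 d\omega.$$
Taking expectation of both sides and exchanging $\E$ with the $\omega$-integral by Fubini-Tonelli (the integrand is non-negative) yields
$$(2\pi)^p \MISE(H) = \int \E\bigl|\ffy - \widehat{f}_Y(\omega)\bigr|^2 d\omega.$$

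Next I would decompose the integrand. Writing $\widehat{f}_Y(\omega) = \widehat{f}_X(\omega)\widehat{f}_\epsilon(\omega)$ and using the definition $\ffy = \widehat{K}(H\omega)\widehat{f}_\epsilon(\omega)\ffx$, I factor out $\widehat{f}_\epsilon(\omega)$ and split off the (stochastic) mean-zero part:
$$\ffy - \widehat{f}_Y(\omega) = \widehat{f}_\epsilon(\omega)\Bigl[\widehat{K}(H\omega)\bigl(\ffx - \widehat{f}_X(\omega)\bigr) \;-\; \bigl(1-\widehat{K}(H\omega)\bigr)\widehat{f}_X(\omega)\Bigr].$$
Since $\ffx = n^{-1}\sum_j e^{i\omega^T X_j}$ is unbiased for $\widehat{f}_X(\omega)$, the first bracketed term has mean zero while the second is deterministic. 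Expanding the squared modulus $|u+v|^2$ with $u$ mean-zero and $v$ deterministic, the cross term $2\,\mathrm{Re}(\E[u]\bar{v})$ vanishes, giving
$$\E\bigl|\ffy - \widehat{f}_Y(\omega)\bigr|^2 = |\widehat{f}_\epsilon(\omega)|^2\Bigl[|\widehat{K}(H\omega)|^2\,\E\bigl|\ffx - \widehat{f}_X(\omega)\bigr|^2 + |1-\widehat{K}(H\omega)|^2\,|\widehat{f}_X(\omega)|^2\Bigr].$$

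Finally I would compute the variance of the empirical characteristic function: since the summands $e^{i\omega^T X_j}$ are i.i.d.\ of unit modulus with mean $\widehat{f}_X(\omega)$, a direct calculation gives $\E|\ffx - \widehat{f}_X(\omega)|^2 = (1 - |\widehat{f}_X(\omega)|^2)/n$. Substituting back, splitting into the two groups, and identifying $d\mu(\omega) = |\widehat{f}_\epsilon(\omega)|^2|\widehat{f}_X(\omega)|^2 d\omega$ and $d\nu(\omega) = |\widehat{f}_\epsilon(\omega)|^2(1-|\widehat{f}_X(\omega)|^2)d\omega$ (positive since $|\widehat{f}_X|\le 1$) reproduces the stated formula. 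No step is genuinely difficult; the one item requiring real care is the justification of Plancherel and of the $\E/\int$ swap, which is where the $L_1$ assumptions on $\widehat{f}_Y$ and $\ffyh$ do their work, so I would expect that to be the only technical obstacle worth writing out.
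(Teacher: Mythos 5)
Your proposal is correct and follows essentially the same route as the paper: Plancherel to pass to the Fourier domain, Fubini--Tonelli to exchange $\E$ with the $\omega$-integral, factoring out $|\widehat{f}_\epsilon(\omega)|^2$, and a bias--variance decomposition of $\widehat{K}(H\omega)\ffx - \widehat{f}_X(\omega)$. The only difference is that where the paper cites the proof of Theorem 1.4 in \cite{tsybakov2009introduction} for the identity $\E|\widehat{K}(H\omega)\ffx - \widehat{f}_X(\omega)|^2 = |1-\widehat{K}(H\omega)|^2|\widehat{f}_X(\omega)|^2 + n^{-1}|\widehat{K}(H\omega)|^2(1-|\widehat{f}_X(\omega)|^2)$, you derive it directly via the mean-zero/deterministic split and the variance $(1-|\widehat{f}_X(\omega)|^2)/n$ of the empirical characteristic function, which is a fine (and self-contained) substitute.
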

See Subsection \ref{sec:char_mise} on \page \pageref{sec:char_mise} for a proof. The representation of the $\MISE$ in Equation \eqref{eq:misefourier} closely resembles that of \cite{tsybakov2009introduction} Theorem 1.4. In Equation \eqref{eq:misefourier}, $\int |1 - \widehat{K}(H\omega)|^2 d\mu(\omega)$ is the integrated squared bias of $\fyh$ and $n^{-1}\int |\widehat{K}(H\omega)|^2 d\nu(\omega)$ is the integrated variance. Notice that for fixed $H$, the variance decreases at rate $n^{-1}$ while the bias is constant. When $H=0$, $\widehat{K}(H\omega) = 1$, so the integrated squared bias term vanishes.

We require assumptions on the kernel $K$ and the bandwidth matrix $H$.
\begin{Assume}
\label{assump:kernel}
\begin{align}
&K \text{ is a symmetric density} \label{assump:symmetric_density}\\
&\widehat{K} \text{ is four times continuously differentiable } \label{assump:1kernel}\\
&H = H_n \succeq 0 \text{ (i.e. sequence is positive semidefinite)} \label{assump:pos_semi}\\
&||H||_{\infty} \rightarrow 0
\end{align}
\end{Assume}
Since we choose the kernel and bandwidth matrix, these assumptions can always be satisfied in practice. Common kernel choices such as the standard normal and uniform on $[-1,1]^p$ satisfy Assumptions \ref{assump:symmetric_density} and \ref{assump:1kernel}. In Assumption \ref{assump:pos_semi}, notice that the bandwidth does not have to be strictly positive definite, unlike in the standard kernel density estimation case. We require the following assumptions on the characteristic functions of $f_X$ and $f_{\epsilon}$. 
\begin{Assume}
\label{assump:density}
\begin{align}
&\int ||\omega||_{\infty}^8 |\widehat{f}_{\epsilon}(\omega)|^2d\omega < \infty \label{assump:1ep}\\
&\int |\widehat{f}_{\epsilon}(\omega)| d\omega < \infty \label{assump:ep_l1}
\end{align}
\end{Assume}
Assumptions \ref{assump:1ep} and \ref{assump:ep_l1} are satisfied as long as the error term has a density that is smooth, such as multivariate normal or Student's t (see \cite{sutradhar1986characteristic} for the characteristic function of the multivariate Student's t).
\begin{Thm}
\label{thm:multi_dim}
Under Assumptions \ref{assump:kernel} and \ref{assump:density} and with the notation of Theorem \ref{thm:char_mise}
\begin{align}
\label{eq:full_bandwidth_mise}
&(2\pi)^p \MISE(H) \nonumber \\
&= \frac{1}{n}\int d\nu(\omega) \nonumber \\
&+ \left(\frac{1}{4} \int (\omega^TH^T \Sigma_K H\omega)^2d\mu(\omega) - \frac{1}{n}\int (\omega^T H^T \Sigma_K H \omega) d\nu(\omega)\right)(1 + O(||H||_{\infty}^2)).
\end{align}
\end{Thm}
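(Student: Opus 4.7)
The plan is to start from the Fourier representation of the MISE in Theorem~\ref{thm:char_mise} and Taylor expand $\widehat{K}(H\omega)$ around the origin to fourth order. Because Assumption~\ref{assump:symmetric_density} makes $K$ a symmetric density, $\widehat{K}$ is real-valued and even; in particular $\widehat{K}(0)=1$, $\nabla\widehat{K}(0)=0$, the third-order partials at $0$ vanish, and the Hessian at $0$ is $-\Sigma_K$. Combined with Assumption~\ref{assump:1kernel}, Taylor's theorem gives
\begin{equation*}
\widehat{K}(H\omega)=1-\tfrac12 \omega^T H^T\Sigma_K H\omega + R(H\omega),
\end{equation*}
where $R$ is a fourth-order remainder. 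Since $\widehat{K}$ is real, $|\widehat{K}(H\omega)|^2 = \widehat{K}(H\omega)^2$, which is convenient for the variance integral.

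Next I would establish a global bound $|R(u)|\leq C\|u\|^4$ valid for all $u\in\mathbb{R}^p$. On a neighborhood of $0$ this follows from the integral form of the Taylor remainder and continuity of the fourth derivatives of $\widehat{K}$; for $\|u\|$ bounded away from $0$ the trivial inequalities $|\widehat{K}(u)|\leq 1$ and $|\tfrac12 u^T\Sigma_K u|\leq C\|u\|^2$ give $|R(u)|\leq C'\|u\|^2\leq C''\|u\|^4$. Using $\|H\omega\|\leq C\|H\|_\infty\|\omega\|$, this yields $|R(H\omega)|\leq C\|H\|_\infty^4\,\|\omega\|^4$.

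I would then substitute the expansion into both integrals of Theorem~\ref{thm:char_mise}. Algebraically,
\begin{equation*}
|1-\widehat{K}(H\omega)|^2=\tfrac14 (\omega^T H^T\Sigma_K H\omega)^2 - (\omega^T H^T\Sigma_K H\omega)R(H\omega) + R(H\omega)^2,
\end{equation*}
while $\widehat{K}(H\omega)^2$ expands to $1-(\omega^T H^T\Sigma_K H\omega)+2R(H\omega)$ plus the same three quadratic-in-perturbation terms. Integrating against $d\mu$ and $d\nu$ produces the three explicit leading terms stated in the theorem, plus an assortment of error integrals. Each error integral is controlled by $|R(H\omega)|\leq C\|H\|_\infty^4\|\omega\|^4$ together with the moment bound $\int\|\omega\|_\infty^8|\widehat{f}_\epsilon(\omega)|^2\,d\omega<\infty$ from Assumption~\ref{assump:1ep}; note also $|\widehat{f}_X|\leq 1$, so the moment bounds transfer immediately to both $d\mu$ and $d\nu$.

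The final step is to compare scales. The bracket in the theorem has a bias piece of order $\|H\|_\infty^4$ and a variance piece of order $n^{-1}\|H\|_\infty^2$. The cross terms I just bounded contribute $O(\|H\|_\infty^6)$ and $O(n^{-1}\|H\|_\infty^4)$; the $R^2$ terms contribute $O(\|H\|_\infty^8)$ and $O(n^{-1}\|H\|_\infty^6)$. Each error is an extra $\|H\|_\infty^2$ factor on top of its parent, which is exactly what is packaged into the multiplicative $(1+O(\|H\|_\infty^2))$ correction. The main obstacle is not any single calculation but the bookkeeping: the bracket mixes two different scales, so to obtain a single, clean multiplicative correction one must match each error integral to the right leading term and verify uniformity of the $O(\|H\|_\infty^2)$ constant across all pieces.
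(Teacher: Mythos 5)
Your proposal follows essentially the same route as the paper: the same Taylor expansion $\widehat{K}(H\omega)=1-\tfrac{1}{2}\omega^T H^T\Sigma_K H\omega+R(H\omega)$ with the same two-case global bound $|R(u)|\le C\|u\|_{\infty}^4$ (the paper's Lemma \ref{lemma:taylor_error_bound}), the same substitution into the bias and variance integrals of Theorem \ref{thm:char_mise}, and the same use of Assumption \ref{assump:1ep} together with $|\widehat{f}_X|\le 1$ to control the remainder integrals and absorb them into the $(1+O(\|H\|_{\infty}^2))$ factor. The only step you omit is the paper's Part 1, the short verification via Assumption \ref{assump:ep_l1} that $\widehat{f}_Y$ and $\ffyh$ lie in $L_1$ so that Theorem \ref{thm:char_mise} applies; otherwise the arguments coincide.
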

See Subsection \ref{sec:multi_d} on \page \pageref{sec:multi_d} for a proof. The $n^{-1}\int d\nu(\omega)$ term is variance in the estimator that does not depend on the bandwidth. If the bandwidth is set to $0$ (see approach 3 below), all other terms vanish and this is the $\MISE$. The $\frac{1}{4} \int (\omega^TH^T \Sigma_K H\omega)^2d\mu(\omega)$ term is bias caused by using a kernel with bandwidth $H$ while $-n^{-1}\int (\omega^T H^T \Sigma_K H \omega) d\nu(\omega)$ is the corresponding reduction in variance. 


While the full bandwidth matrix offers the most flexibility and greatest potential for reduction in MISE, this expression is difficult to optimize, see Subsection \ref{sec:full_bandwidth}. More simply, one could use a diagonal bandwidth matrix and optimize $p$ bandwidths. We study this case in Subsection \ref{sec:diagonal}. Here we focus on using a scalar bandwidth. This allows for the most direct and straightforward comparisons of the different approaches to smoothing.

\subsection{Scalar Bandwidth}
\label{sec:scalar_bandwidth_density}
We reparameterize the bandwidth $H = hI$. Here the general $\MISE$ expression in Equation \eqref{eq:full_bandwidth_mise} becomes
\begin{align}
&(2\pi)^p \MISE(h) \nonumber \\
&= \frac{1}{n}\int d\nu(\omega) + \left(\frac{h^4}{4} \int (\omega^T \Sigma_K \omega)^2 d\mu(\omega) - \frac{h^2}{n}\int (\omega^T \Sigma_K \omega) d\nu(\omega) \right)(1 + O(h^2)). \label{eq:scalar_mise_degenerate}
\end{align}
We now discuss the three smoothing approaches

\textbf{Approach 1:} It is straightforward to find the bandwidth that minimizes the main terms in this $\MISE$ expression. Specifically,
\begin{align}
h^*_Y &= \argmin{h \geq 0} \left(\frac{h^4}{4} \int (\omega^T \Sigma_K \omega)^2 d\mu(\omega) - \frac{h^2}{n}\int (\omega^T \Sigma_K \omega) d\nu(\omega) \right) \nonumber \\
&= \sqrt{\frac{2\int (\omega^T \Sigma_K \omega) d\nu(\omega)}{n\int (\omega^T \Sigma_K \omega)^2 d\mu(\omega)}}. \label{eq:1dasymptopt}
\end{align}
$h^*_Y$ is of order $n^{-1/2}$. This order does not depend on $p$, unlike for the error free case where, under some regularity conditions, the order for the asymptotically optimal amount of smoothing is $n^{-1/(4+p)}$. However, the dimension $p$ will affect the constant on $h_Y^*$ in Equation \eqref{eq:1dasymptopt}. We explore the relationship between the dimension $p$ and the optimal amount of smoothing for finite samples in Section \ref{sec:finite_sample}.  Using, $h^*_Y$ the $\MISE$ is
\begin{equation*}
(2\pi)^p \MISE(h^*_Y) = \frac{1}{n}\int d\nu(\omega) \\
- \, \frac{1}{n^2} \frac{\left(\int (\omega^T \Sigma_K \omega) d\nu(\omega) \right)^2}{\left(\int (\omega^T \Sigma_K \omega)^2 d\mu(\omega)\right)} + O(n^{-3}).
\end{equation*}

\textbf{Approach 2:}
Set $h$ to minimize $\MISE$ in estimating $f_X$. Under certain regularity conditions on $f_X$, the bandwidth is order $n^{-1 / (4 + p)}$ (e.g. see \cite{wand1995kernel} page 100). Specifically, suppose
\begin{equation*}
h_X^* = D(n)n^{-1/(4+p)},
\end{equation*}
where $D: \mathbb{Z}^+ \rightarrow \mathbb{R}^+$ such that $\limsup_n D(n) < \infty$ and $\liminf_n D(n) > 0$. The $\MISE$ for estimating $f_Y$ using $h_X^*$ (obtained from Equation \eqref{eq:scalar_mise_degenerate}) is 
\begin{align}
(2\pi)^p \MISE(h_X^*)=&\frac{1}{n}\int d\nu(\omega) + \bigg(\frac{D(n)^4n^{-4/(4+p)}}{4} \int (\omega^T \Sigma_K \omega)^2 d\mu(\omega) \nonumber \\
&- D(n)^2n^{-(6+p)/(4+p)}\int (\omega^T \Sigma_K \omega) d\nu(\omega) \bigg)(1 + O(n^{-2/(4+p)})) \nonumber \\
&= \left(\frac{D(n)^4n^{-4/(4+p)}}{4} \int (\omega^T \Sigma_K \omega)^2 d\mu(\omega)\right)(1 + o(1)). \label{eq:fob}
\end{align}
The $n^{-4/(4+p)}$ order for the $\MISE$ when using $h_X^*$ is strictly worse than the $n^{-1}$ order that can be achieved by optimizing the bandwidth specifically for the error distribution, i.e. using $h^*_Y$. Essentially using $h_X^*$ oversmooths $\fyh$. The first order term in $\MISE(h_X^*)$, Equation \eqref{eq:fob}, is caused entirely by bias.

\textbf{Approach 3:}
Set $h=0$. Here we have
\begin{equation*}
(2\pi)^p\MISE(0) = \frac{1}{n}\int d\nu(\omega) = \frac{1}{n}\left(\int |\widehat{f}_{\epsilon}(\omega)|^2d\omega - \int |\widehat{f}_{\epsilon}(\omega)|^2|\widehat{f}_X(\omega)|^2d\omega\right).
\end{equation*}
Asymptotically, this approach is better than approach 2 since $\MISE(0)$ is order $n^{-1}$. The ratio of using asymptotically optimal smoothing ($h_Y^*$) to no smoothing is
\begin{equation*}
\frac{\MISE(h_Y^*)}{\MISE(0)} = 1 - \frac{1}{n}\frac{\left(\int (\omega^T \Sigma_K \omega) d\nu(\omega) \right)^2}{\left(\int (\omega^T \Sigma_K \omega)^2 d\mu(\omega)\right) \left( \int d\nu(\omega) \right)} + O(n^{-2}).
\end{equation*}

These asymptotic results suggest that using the bandwidth that minimizes error in estimating $f_X$ for estimating $f_Y$ is a poor idea. This procedure results in a convergence rate of higher order than either not smoothing (approach 3) or smoothing specifically for $f_Y$ (approach 1), i.e. using $h^*_Y$. The asymptotic results show an improvement only at the $n^{-2}$ level by using $h=h^*_Y$ rather than $h=0$. Based strictly on asymptotic analysis, this small improvement in error rate may not appear to justify the extra effort required to estimate $h_Y$. However, simulation results in Section \ref{sec:finite_sample} indicate that the effects of using $h_Y$ are more important than the asymptotic analysis suggest, both in terms of minimizing $\MISE$ and preserving important qualitative features of the densities.

\section{Finite Sample Results}
\label{sec:finite_sample}

The asymptotic results from Section \ref{sec:asymptotics} illustrate the large sample behavior of the $\MISE$ under different approaches to choosing the bandwidth parameter. However it is important to understand the finite sample behavior of these quantities and what sample sizes are needed for asymptotics to be informative.

Calculation of the exact $\MISE$ in Equation \ref{eq:misefourier} requires numerically approximating integrals. Therefore it is computationally challenging, given a $f_X$, $f_{\epsilon}$, and kernel $K$, to determine the bandwidth $H$ which minimizes the $\MISE$. For the error-free kernel density estimation case, \cite{wand1993comparison} showed that when $f_X$ is a normal mixture and $K$ is a normal kernel, the exact $\MISE$ has a simple representation that does not require numerically approximating integrals. Using this result, the authors compared bandwidth parameterizations for bivariate density estimation. Here we generalize this result to the case with Berkson measurement error. We assume $f_X$ is a normal mixture and $K$ and $f_{\epsilon}$ are normal. We use this result for studying the finite sample properties of various bandwidth selection approaches. Theorem \ref{thm:normal_mise} is a generalization of Theorem 1 in \cite{wand1993comparison} to the case with Berkson error.
\begin{Thm}
\label{thm:normal_mise}
Let $\phi_\Sigma$ be the mean $0$, normal density with covariance $\Sigma$. Assume the kernel $K = \phi_{\Sigma_K}$ and the error density $f_{\epsilon} = \phi_{\Sigma_{\epsilon}}$. Assume that $f_X$ is a mixture of normal densities parameterized by $\{(\w_j,\mu_j,\Sigma_j)\}_{j=1}^m$ where $\sum_{j=1}^m \w_j = 1$ and $(\w_j,\mu_j,\Sigma_j)$ is the mixing proportion, mean, and variance of the $j^{th}$ component of $f_X$. In other words
\begin{equation*}
f_X(x) = \sum_{j=1}^m \w_j \phi_{\Sigma_j}(x - \mu_j).
\end{equation*}
Let $S = H^T \Sigma_K H$. Let $\Omega_a$ for $a \in \{0,1,2\}$ be a $m \times m$ matrix with $j,j'$ entry equal to 
\begin{equation*}
\phi_{aS + 2\Sigma_{\epsilon} + \Sigma_{j} + \Sigma_{j'}}(\mu_j - \mu_{j'}).
\end{equation*}
Finally let $\w = (\w_1, \ldots, \w_m)$. Then
\begin{equation}
\label{eq:exact_mise}
\MISE(H) = \frac{1}{n} \phi_{2S + 2\Sigma_\epsilon}(0) + \w^T ((1 - n^{-1}) \Omega_2 - 2\Omega_1 + \Omega_0)\w.
\end{equation}
\end{Thm}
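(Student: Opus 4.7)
The plan is to work directly in the spatial (not Fourier) domain, using the fact that under the Gaussian assumptions, all the convolutions in sight remain Gaussian. First I would observe that because $\widehat{K}(H\omega) = \exp(-\omega^T S\omega/2)$ and $\widehat{f}_\epsilon(\omega) = \exp(-\omega^T \Sigma_\epsilon \omega/2)$, the characteristic function $\ffy = \widehat{K}(H\omega)\widehat{f}_\epsilon(\omega)\ffx$ inverts to
\begin{equation*}
\fy = \frac{1}{n}\sum_{j=1}^n \phi_{S + \Sigma_\epsilon}(y - X_j),
\end{equation*}
and that the target density is the Gaussian mixture $f_Y(y) = \sum_{j=1}^m \w_j\, \phi_{\Sigma_j + \Sigma_\epsilon}(y - \mu_j)$.

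Next I would decompose
\begin{equation*}
\MISE(H) = \E\!\!\int \fy^{\,2}\, dy \; - \; 2\,\E\!\!\int \fy\, f_Y(y)\, dy \; + \; \int f_Y(y)^2\, dy,
\end{equation*}
and attack each term using the Gaussian convolution identity $\int \phi_A(y-a)\phi_B(y-b)\, dy = \phi_{A+B}(a-b)$. For the third term this immediately yields $\w^T\Omega_0\w$. For the middle cross term, applying the identity pointwise in $X$ gives $\phi_{S + 2\Sigma_\epsilon + \Sigma_k}(X - \mu_k)$, and then taking the expectation over $X \sim f_X$ (itself a mixture) produces another round of the same identity, collapsing to $\w^T\Omega_1\w$.

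The only slightly delicate piece is the first term, where I would carefully separate the sample indices $j=k$ from $j\neq k$:
\begin{equation*}
\E\!\!\int \fy^{\,2}\, dy = \frac{1}{n^2}\sum_{j,k} \E\, \phi_{2S + 2\Sigma_\epsilon}(X_j - X_k).
\end{equation*}
The $n$ diagonal terms give $\frac{1}{n}\phi_{2S+2\Sigma_\epsilon}(0)$, while the $n(n-1)$ off-diagonal terms use independence of $X_j, X_k$ and the mixture structure of $f_X$ to reduce, via the Gaussian convolution identity again, to $(1-n^{-1})\,\w^T\Omega_2\w$. Summing the three contributions gives the stated formula.

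The main obstacle is bookkeeping rather than any conceptual difficulty: the symbol $j$ plays two distinct roles (sample index versus mixture component index), and the identity $\int \phi_A\phi_B = \phi_{A+B}$ is invoked several times in slightly different configurations, so I would fix disjoint notation (e.g., $i$ for sample indices, $j,j'$ for mixture components) and handle the diagonal/off-diagonal split in the variance term explicitly, since that split is what produces the $n^{-1}$ factors distinguishing $\Omega_2$ from $(1-n^{-1})\Omega_2$. No additional regularity conditions are needed beyond those baked into the Gaussian setup (all integrals converge absolutely because Gaussians are in every $L^p$).
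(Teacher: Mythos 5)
Your proof is correct. The paper organizes the same computation slightly differently: it uses the bias--variance decomposition $\MISE(H) = \int \Var(\fy)\,dy + \int\left(\E[\fy]-f_Y(y)\right)^2 dy$, computing the integrated variance as $\frac{1}{n}\left(\phi_{2S+2\Sigma_\epsilon}(0) - \w^T\Omega_2\w\right)$ and the integrated squared bias as $\w^T(\Omega_2-2\Omega_1+\Omega_0)\w$, whereas you expand the square directly into $\E\int\fy^2\,dy - 2\E\int\fy f_Y(y)\,dy + \int f_Y(y)^2\,dy$ and handle the diagonal/off-diagonal split of the double sum over sample indices by hand. The two routes are equivalent term by term: your diagonal contribution $n/n^2$ is exactly the paper's $\frac{1}{n}\int\E[\phi_{S+\Sigma_\epsilon}^2(y-X_1)]\,dy$, and the coefficient $(1-n^{-1})$ on $\Omega_2$ that you obtain from the off-diagonal count $n(n-1)/n^2$ arises in the paper as the sum of $-n^{-1}\w^T\Omega_2\w$ from the variance and $+\w^T\Omega_2\w$ from the bias. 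Both arguments rest on the same two ingredients, the closed form $\fy = \frac{1}{n}\sum_i\phi_{S+\Sigma_\epsilon}(y-X_i)$ and the identity $\int\phi_\Sigma(x-\mu)\phi_{\Sigma'}(x-\mu')\,dx = \phi_{\Sigma+\Sigma'}(\mu-\mu')$, so nothing substantive separates them; the paper's grouping has the minor advantage of exhibiting the integrated variance and squared bias separately, while yours makes the origin of the $n^{-1}$ factors more transparent. Your closing remarks about notation and absolute convergence are apt and require no further justification.
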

See Subsection \ref{sec:normal_mise_proof} for a proof. Equation \eqref{eq:exact_mise} can be evaluated at a particular bandwidth $H$ without numerically approximating integrals.

In Subsection \ref{sec:relative_error} we compare the $\MISE$ for the three bandwidth selection approaches at finite sample sizes in one dimension. In Subsection \ref{sec:relative_error_3d} we repeat this analysis for several 3--dimensional densities. In Subsection \ref{sec:conf_bounds} we show the visual impacts of different $\MISE$s by plotting pointwise quantiles for density estimates using different bandwidth selection approaches. Finally in Subsection \ref{sec:conv_rate} we explore how fast the asymptotic approximations from Section \ref{sec:asymptotics} for the optimal smoothing parameter for $f_Y$ take hold. All of the results presented in this section can be reproduced using publicly available code.\footnote{R-code and data for generating results in Sections \ref{sec:finite_sample} and \ref{sec:real_data} are available at \url{http://stat.tamu.edu/~jlong/berkson.zip}.}

\subsection{Relative Error in One Dimension}
\label{sec:relative_error}

In this section the $f_X$ densities are 1--dimensional normal mixtures which satisfy the assumptions of Theorem \ref{thm:normal_mise}. The densities we consider, and associated names, are plotted in Figure \ref{fig:densities}. The exact parameter values for these densities are given in Table \ref{tab:densities}. 

\begin{figure}[h]
  \begin{center}
    \begin{includegraphics}[height=5.5in,width=5in]{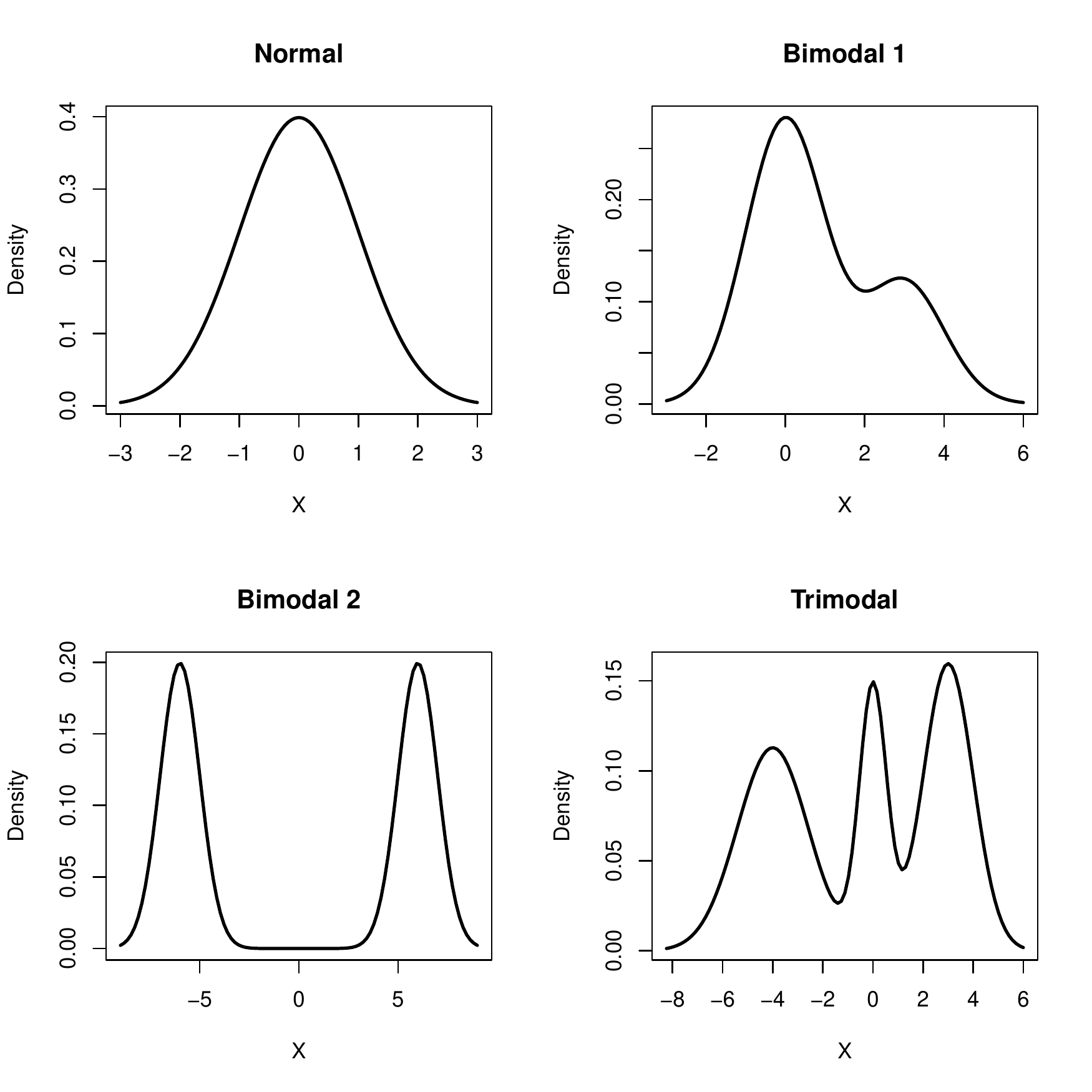}
      \caption{The four 1--dimensional Gaussian mixture densities we consider. \label{fig:densities}}
    \end{includegraphics}
  \end{center}
\end{figure}

\begin{table}[ht]
\centering
\begin{tabular}{r|cc}
  \hline
Name & Parameters \\
  \hline
Normal & $\phi_1(x)$\\
Bimodal 1 & $.7\phi_1(x) + .3\phi_1(x-3)$\\
Bimodal 2 &  $.5\phi_1(x+6) + .5\phi_1(x-6)$\\
Trimodal &  $.4\phi_2(x+4) + .2\phi_{.3}(x) + .4\phi_1(x-3)$\\
   \hline
\end{tabular}
\caption{Parameters for the four densities plotted in Figure \ref{fig:densities} where $\phi_\Sigma$ is the normal, mean 0 density with covariance $\Sigma$.} 
\label{tab:densities}
\end{table}

We now compare the $\MISE$ for the densities in Figure \ref{fig:densities} using the three approaches for selecting the bandwidth parameter. Recall that the approaches are: 1) optimize the bandwidth for estimating $f_Y$, 2) optimize the bandwidth for estimating $f_X$, and 3) set the bandwidth equal to $0$. In Section \ref{sec:asymptotics} we showed that optimizing the bandwidth for $f_Y$ (approach 1) and setting the bandwidth equal to $0$ (approach 3) resulted in the same first order asymptotic performance for the $\MISE$. In contrast, optimizing the bandwidth for $f_X$ (approach 2) resulted in slower, nonparametric convergence rates.

Let $h_Y$ be the optimal bandwidth for estimating $f_Y$ and $h_X$ be the optimal bandwidth for estimating $f_X$ ($h_Y$ and $h_X$ are sequences implicitly indexed by the sample size $n$). We now compare $\MISE(h_Y)$, $\MISE(h_X)$, and $\MISE(0)$ at finite $n$ for the four densities in Figure \ref{fig:densities} and a variety of error variances $\sigma_{\epsilon}^2$. Clearly $\MISE(h_Y) \leq MISE(h_X)$ and $\MISE(h_Y) \leq MISE(0)$ since $h_Y$ is the minimizer of the $\MISE$. We seek to understand the level of reduction in $\MISE$ one can achieve by using $h_Y$, the parameter settings where these reductions occur, and how $\MISE(h_X)$ compares to $\MISE(0)$. We note that $h_X$ and $h_Y$ are exact minimizers for the $\MISE$ of $f_Y$ and $f_X$, not asymptotic approximations.

In Table \ref{tab:50} we present $\left(\frac{MISE(0)}{MISE(h_Y)},\frac{MISE(h_X)}{MISE(h_Y)}\right)$ for four densities and five error variances (the error is normal, mean 0) for $n=50$. We note some general trends. As $\sigma_{\epsilon}^2$ decreases, $\MISE(h_X)/MISE(h_Y)$ decreases. With small $\sigma_{\epsilon}^2$, $f_Y$ is close to $f_X$ and thus $h_X$ and $h_Y$ are close. In contrast, as $\sigma_{\epsilon}^2$ decreases, $\MISE(0)/MISE(h_Y)$ increases. With no smoothing and small $\sigma_{\epsilon}^2$, approach 3 undersmooths the density estimate.

\begin{table}[ht]
\centering
\begin{tabular}{r|cccc}
  \hline
$\sigma_{\epsilon}^2$ & Normal & Bimodal 1 & Bimodal 2 & Trimodal \\ 
  \hline
2 & (1.02,1.18) & (1.08,1.01) & (1.03,1.02) & (1.18,1.05) \\ 
  1 & (1.05,1.17) & (1.15,1.01) & (1.07,1.03) & (1.24,1.04) \\ 
  0.5 & (1.13,1.11) & (1.26,1.01) & (1.16,1.03) & (1.30,1.01) \\ 
  0.25 & (1.32,1.05) & (1.50,1.00) & (1.37,1.01) & (1.46,1.00) \\ 
  0.125 & (1.70,1.02) & (1.92,1.00) & (1.76,1.01) & (1.77,1.00) \\ 
   \hline
\end{tabular}
\caption{Each entry is $\left(\frac{MISE(0)}{MISE(h_Y)},\frac{MISE(h_X)}{MISE(h_Y)}\right)$ for $n=50$. These ratios are always greater than $1$ because $h_Y$ is the minimizer of the $\MISE$. As expected, $\MISE(0)$ performs well when $\sigma_{\epsilon}^2$ (the error variance) is large but poorly when $\sigma_{\epsilon}^2$ is small. $\MISE(h_X)$ performs well when $\sigma_{\epsilon}^2$ is small but poorly when $\sigma_{\epsilon}^2$ is large.} 
\label{tab:50}
\end{table}

For the densities and error variances considered, using $h_X$ (approach 2) is generally better than no smoothing (approach 3). Only for the normal distribution with $\sigma_{\epsilon}^2=2$ or $1$ does no smoothing outperform smoothing with $h_X$. This is surprising given the asymptotic results showed that the convergence rate using $h_X$ is slower than the rate using no smoothing. For the densities and error distributions considered, a sample size of $50$ is not large enough for these asymptotics to take hold. An important caveat to this conclusion is that the no smoothing estimator is simpler than $h_X$ because it has no smoothing parameters.


\begin{table}[ht]
\centering
\begin{tabular}{r|cccc}
  \hline
$\sigma_{\epsilon}^2$ & Normal & Bimodal 1 & Bimodal 2 & Trimodal \\ 
  \hline
2 & (1.01,1.24) & (1.04,1.03) & (1.02,1.04) & (1.09,1.02) \\ 
  1 & (1.03,1.24) & (1.08,1.03) & (1.04,1.06) & (1.12,1.01) \\ 
  0.5 & (1.07,1.18) & (1.15,1.03) & (1.09,1.06) & (1.16,1.00) \\ 
  0.25 & (1.19,1.09) & (1.31,1.02) & (1.24,1.03) & (1.27,1.00) \\ 
  0.125 & (1.46,1.04) & (1.62,1.01) & (1.53,1.01) & (1.50,1.00) \\ 
   \hline
\end{tabular}
\caption{The entries here are the same as Table \ref{tab:50} but for $n=100$. This larger $n$ generally improves performance for $\MISE(0)$ and worsens the performance of $\MISE(h_X)$ (relative to $\MISE(h_Y)$). This is predicted by our asymptotic theory, since as $n \rightarrow \infty$, $\frac{MISE(0)}{MISE(f_Y)} \rightarrow 1$ while $\frac{MISE(h_X)}{MISE(h_Y)} \rightarrow \infty$. However at $n=100$, using $h_X$ still generally outperforms no smoothing.} 
\label{tab:100}
\end{table}

In Table \ref{tab:100} we plot the same quantities as Table \ref{tab:50} but for $n=100$. The same general trends apply here as with the $n=50$ case: As $\sigma_{\epsilon}^2$ decreases $\MISE(h_X)$ decreases while $\MISE(0)$ increases (relative to $\MISE(h_Y)$). Note that the performance of no smoothing relative to $h_X$ is generally better for $n=100$ than $n=50$. For example, with Bimodal 2 and $\sigma_{\epsilon}^2 = 2$, $\MISE(h_X) < MISE(0)$ for $n=50$, but $\MISE(h_X) > MISE(0)$ for $n=100$. In fact, for every case considered $\MISE(0)/MISE(h_Y)$ is lower for $n=100$ than $n=50$. With the exception of Trimodal, $\MISE(h_X)/MISE(h_Y)$ is always greater for $n=100$ than $n=50$.

The asymptotic results in Section \ref{sec:scalar_bandwidth_density} predict this behavior. As $n \rightarrow \infty$, $\frac{MISE(0)}{MISE(f_Y)} \rightarrow 1$ while $\frac{MISE(h_X)}{MISE(h_Y)} \rightarrow \infty$. So for large enough $n$, $\MISE(0) < MISE(h_X)$. However, for most densities and error variances considered here, $n=100$ is not large enough for these asymptotics to take hold. These results suggest that at sample sizes of potential interest, using no smoothing can undersmooth the density estimate. This effect appears most significant when the error density is concentrated near 0.

\subsection{Relative Error in Three Dimensions}
\label{sec:relative_error_3d}

We now explore relative error rates for 3--dimensional densities. The Gaussian mixture densities we study are defined in Table \ref{tab:multi_densities} using the notation
\begin{equation}
\label{eq:plusminus}
+ = \begin{bmatrix}
1 & 0.64 & 0\\
0.64 & 1 & 0.64\\
0 & 0.64 & 1\\
\end{bmatrix}
\, \, \, \, \, \, \, \, \, \, \, \, \, \, 
- = \begin{bmatrix}
1 & -0.64 & 0\\
-0.64 & 1 & -0.64\\
0 & -0.64 & 1\\
\end{bmatrix}
\end{equation}
for covariances matrices. The four densities in Table \ref{tab:multi_densities} are meant to be 3--dimensional generalizations of the 1--dimensional densities from Table \ref{tab:densities}. In particular Multi. Normal is a 3--dimensional normal, a direct generalization of the 1--dimensional normal. Multi 2-Comp 2 is a mixture of two well separated, identity covariance normals. This is a close analogue to Bimodal 2 from Table \ref{tab:densities}.
\begin{table}[ht]
\centering
\begin{tabular}{r|cc}
  \hline
Name & Parameters \\
  \hline
Multi. Normal & $\phi_I(x)$\\
Multi. 2-Comp 1 & $.7\phi_{+}(x) + .3\phi_{-}(x-(1,1,1)^T)$\\
Multi. 2-Comp 2 &  $.5\phi_I(x-(6,0,0)^T) + .5\phi_I(x+(6,0,0)^T)$\\
Multi. 3-Comp &  $.4\phi_{+}(x) + .2\phi_{-}(x-(1,1,1)^T) + .4\phi_-(x)$\\
   \hline
\end{tabular}
\caption{Parameters for the four 3--dimensional densities studied. $\phi_\Sigma$ is the normal, mean 0 density in three dimensions with covariance $\Sigma$. Here $I$ is the $3 \times 3$ identity matrix and the $+$ and $-$ signs are covariance matrices defined in Equation \ref{eq:plusminus}.}
\label{tab:multi_densities}
\end{table}

We consider 5 error densities for $\epsilon$. Each error density is normal, mean 0, with all diagonal elements equal and 0 for all covariances. The normality of $\epsilon$ is required by Theorem \ref{thm:normal_mise}. The other choices were made to keep these simulations a reasonable size. For diagonal terms of the covariance, we consider the same values as for the 1--dimensional case: $2,1,0.5,0.25,$ and $0.125$.

Tables \ref{tab:multi100} and \ref{tab:multi500} present the ratios $\left(\frac{MISE(0)}{MISE(h_Y)},\frac{MISE(h_X)}{MISE(h_Y)}\right)$ for all 20 $f_X$, $f_{\epsilon}$ pairs for $n=100$ and $n=500$ respectively. The first column in each table, $\sigma_\epsilon^2$, refers to the diagonal elements of the covariance matrix. The $n=100$ case, Table \ref{tab:multi100}, allows for direct comparison with the 1--dimensional case in Table \ref{tab:100}. The $n=500$ case, Table \ref{tab:multi500}, provides results for what is perhaps a more realistic sample size when attempting to estimate a 3--dimensional density non--parametrically.

We discuss the $n=100$ results, Table \ref{tab:multi100}. In general $h_Y$ performs better relative to no smoothing and $h_X$ smoothing in three dimensions than in one dimension. In particular, all ratios are larger for Multi. Normal and Multi. 2-Comp 2 in Table \ref{tab:multi100} than Normal and Bimodal 2 in Table \ref{tab:100}. As before, with small error variance, approach 3 undersmooths the density estimate. As in the 1-dimensional case, $h_X$ oversmooths the density estimates when the error variance is large. This effect appears worse in three dimensions than in one dimension. For example, for the standard normal with $\sigma_{\epsilon}^2 = 2$ and $n=100$, $\MISE(h_X) / MISE(h_Y)$ is 1.24 in one dimension (Table \ref{tab:100}) and 1.76 in three dimensions (Table \ref{tab:multi100}).

\begin{table}[ht]
\centering
\begin{tabular}{r|cccc}
  \hline
$\sigma_{\epsilon}^2$ & Multi Normal & Multi 2-Comp 1 & Multi 2-Comp 2 & Multi 3-Comp \\ 
  \hline
2 & (1.02,1.76) & (1.02,1.13) & (1.04,1.28) & (1.02,1.20) \\ 
  1 & (1.07,1.63) & (1.06,1.12) & (1.12,1.28) & (1.07,1.15) \\ 
  0.5 & (1.24,1.35) & (1.16,1.08) & (1.39,1.17) & (1.21,1.07) \\ 
  0.25 & (1.80,1.14) & (1.40,1.05) & (2.18,1.07) & (1.55,1.02) \\ 
  0.125 & (3.38,1.05) & (2.00,1.02) & (4.34,1.02) & (2.32,1.01) \\ 
   \hline
\end{tabular}
\caption{Three dimensional finite sample results for $n=100$. Generally, $h_X$ and no smoothing perform worse relative to $h_Y$ here than for $n=100$ in one dimension (see Table \ref{tab:100}).} 
\label{tab:multi100}
\end{table}

We now discuss the results for $n=500$, Table \ref{tab:multi500}. Note that with the larger sample size, all of the $\MISE(0)/MISE(h_Y)$ ratios have decreased while all of the $\MISE(h_X)/MISE(h_Y)$ ratios have increased relative to the $n=100$ case in Table \ref{tab:multi100}. The asymptotic results from Section \ref{sec:asymptotics} predict this behavior. As $n \rightarrow \infty$, $\MISE(0)/MISE(h_Y) \rightarrow 1$ while  $\MISE(h_X)/MISE(h_Y) \rightarrow \infty$. As expected, for all densities $\MISE(0)/MISE(h_Y)$ is increasing as $\sigma_\epsilon^2$ decreases.

\begin{table}[ht]
\centering
\begin{tabular}{r|cccc}
  \hline
$\sigma_{\epsilon}^2$ & Multi Normal & Multi 2-Comp 1 & Multi 2-Comp 2 & Multi 3-Comp \\ 
  \hline
2 & (1.00,2.66) & (1.00,1.27) & (1.01,1.72) & (1.01,1.37) \\ 
  1 & (1.01,2.54) & (1.01,1.30) & (1.03,1.82) & (1.02,1.34) \\ 
  0.5 & (1.06,2.00) & (1.03,1.29) & (1.10,1.57) & (1.05,1.25) \\ 
  0.25 & (1.25,1.45) & (1.10,1.23) & (1.41,1.26) & (1.14,1.16) \\ 
  0.125 & (1.94,1.16) & (1.30,1.14) & (2.37,1.09) & (1.41,1.09) \\ 
   \hline
\end{tabular}
\caption{Three dimensional finite sample results for $n=500$. $\MISE(h_X) / MISE(h_Y)$ is larger and $\MISE(0)/MISE(h_Y)$ is smaller here relative to Table \ref{tab:multi100} where the sample size was $100$.} 
\label{tab:multi500}
\end{table}

The three dimensional finite sample results reinforce the conclusion from the one dimensional results that not smoothing when the error variance is small produces undersmoothed estimates with large $\MISE$ relative to using the optimal smoothing parameter. Additionally, in three dimensions, using $h_X$ oversmooths the density estimates in many cases where the error variance is large.

The 3--dimensional simulations here were limited to a very narrow set of error distributions. In particular the error variance was equal in all directions. Another interesting setting to consider is when the error variance is very small along certain directions and sizable along other directions. The limiting case of this setting has no error along certain directions.

When there is no error in certain direction but error in other directions, one can obtain convergence rates that depend on the dimension of the space on which there is error (\cite{long2013thesis}, Theorem 2.3). Specifically, suppose one estimates a $p$ dimensional density $f_Y$ and $\epsilon$ is 0 with probability $1$ on a $p_0$ dimensional subspace of $\mathbb{R}^p$. Suppose $\epsilon$ has a density on the other $p_1$ dimensions ($p = p_0 + p_1$). Then for second order kernels, under some regularity conditions, the optimal smoothing using a scalar bandwidth is of order $n^{-1/(4 + p_1)}$ and results in an $\MISE$ of order $n^{-4/(4+p_1)}$ (see Equation 2.27 in \cite{long2013thesis}). Note that this rate is between the error free rate of $n^{-4/(4+p)}$ and the error in all directions case where the $\MISE$ is of order $n^{-1}$.

\subsection{Qualitative Impacts of Smoothing Parameters}
\label{sec:conf_bounds}

We now visualize some of the results in Table \ref{tab:50} by plotting pointwise quantiles for density estimates for different choices of smoothing parameters. In order to obtain an understanding of the impact of using $h_X$ (approach 2) or no smoothing (approach 3), we examine the 1-dimensional cases where these methods perform worst relative to using $h_Y$ (approach 1). This shows some of the qualitative impacts of suboptimal smoothing on the density estimate.

We first study $\sigma_{\epsilon}^2=2$, Normal. Here $h_X$ performed worst (relative to $h_Y$) out of all the densities and error variances considered (see Table \ref{tab:50}). We generate 100 samples of size 50 from Normal. Using these 100 samples, we construct 100 density estimates using $h_Y$ and $h_X$. In Figure \ref{fig:normal_confs_draws} a) we plot the .1 and .9 pointwise quantiles for these density estimates (orange--dashed for $h_Y$ and blue--dotted for $h_X$) along with the true underlying density $f_Y$ (i.e. the Normal density convolved with $\phi_2$) in black--solid. The quantiles for $h_X$ have a lower peak and heavier tails than the quantiles for the $h_Y$ density estimates. Using $h_X$ oversmooths the density estimates ($h_X = 0.52$ and $h_Y = 0.26$).

In Figures \ref{fig:normal_confs_draws} b) and c) we plot 10 density estimates using $h_Y$ and $h_X$ respectively. We see that the individual density estimates using $h_X$ are negatively biased near $Y=0$ and positively biased for large $|Y|$. Since all the density estimates are unimodal, with a mode near 0 and approximately normal, the qualitative conclusions that one is likely to draw from these density estimates are likely to be similar regardless of whether one is using $h_X$ or $h_Y$.

\begin{figure}[h]
\begin{center}
\begin{tabular}{ccc}
$\begin{array}{ccc}
\multicolumn{1}{l}{\mbox{(a)}} & \multicolumn{1}{l}{\mbox{(b)}} & \multicolumn{1}{l}{\mbox{(c)}} \\ \\ \\ [-.5in]
\includegraphics[width=0.3\textwidth]{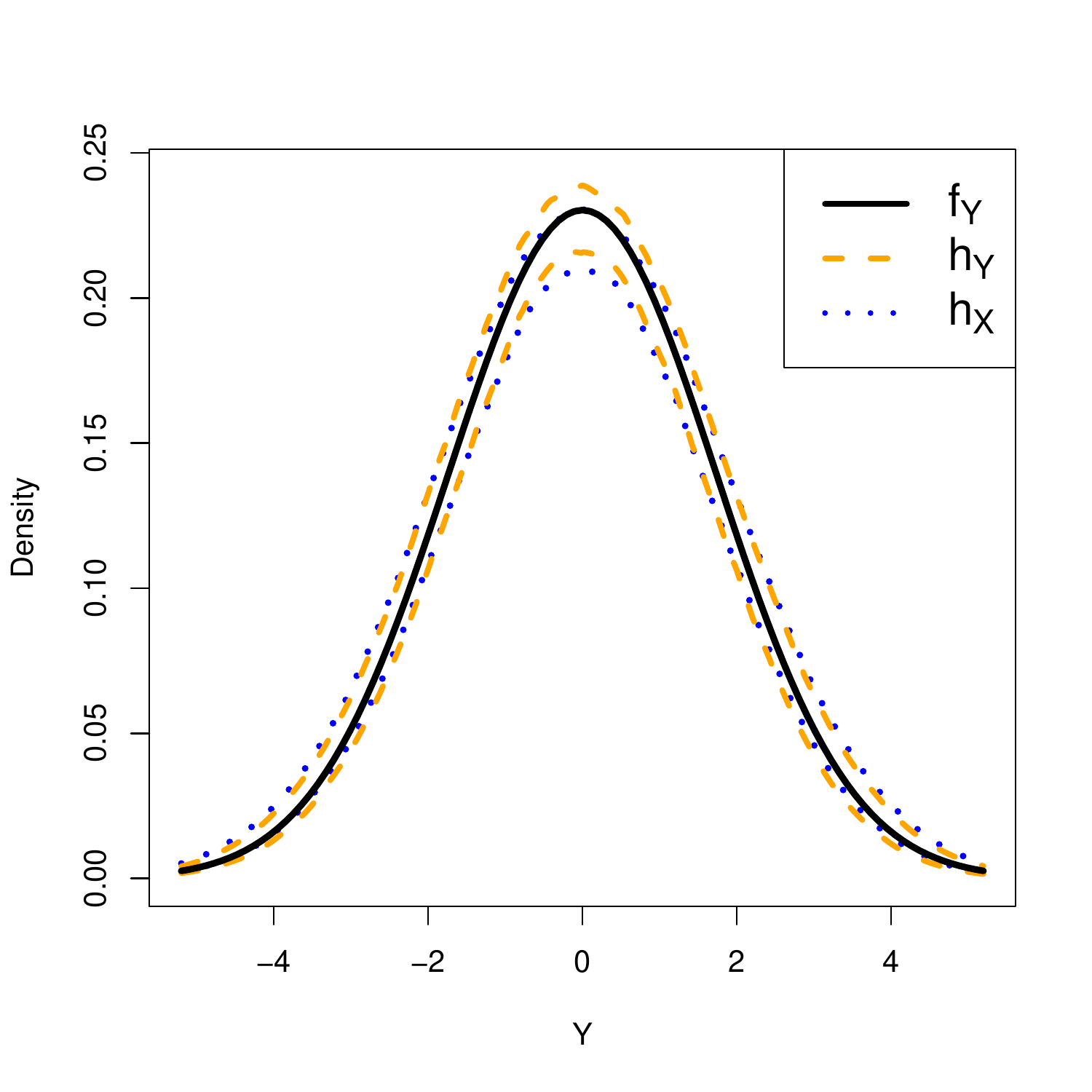} &
\includegraphics[width=0.3\textwidth]{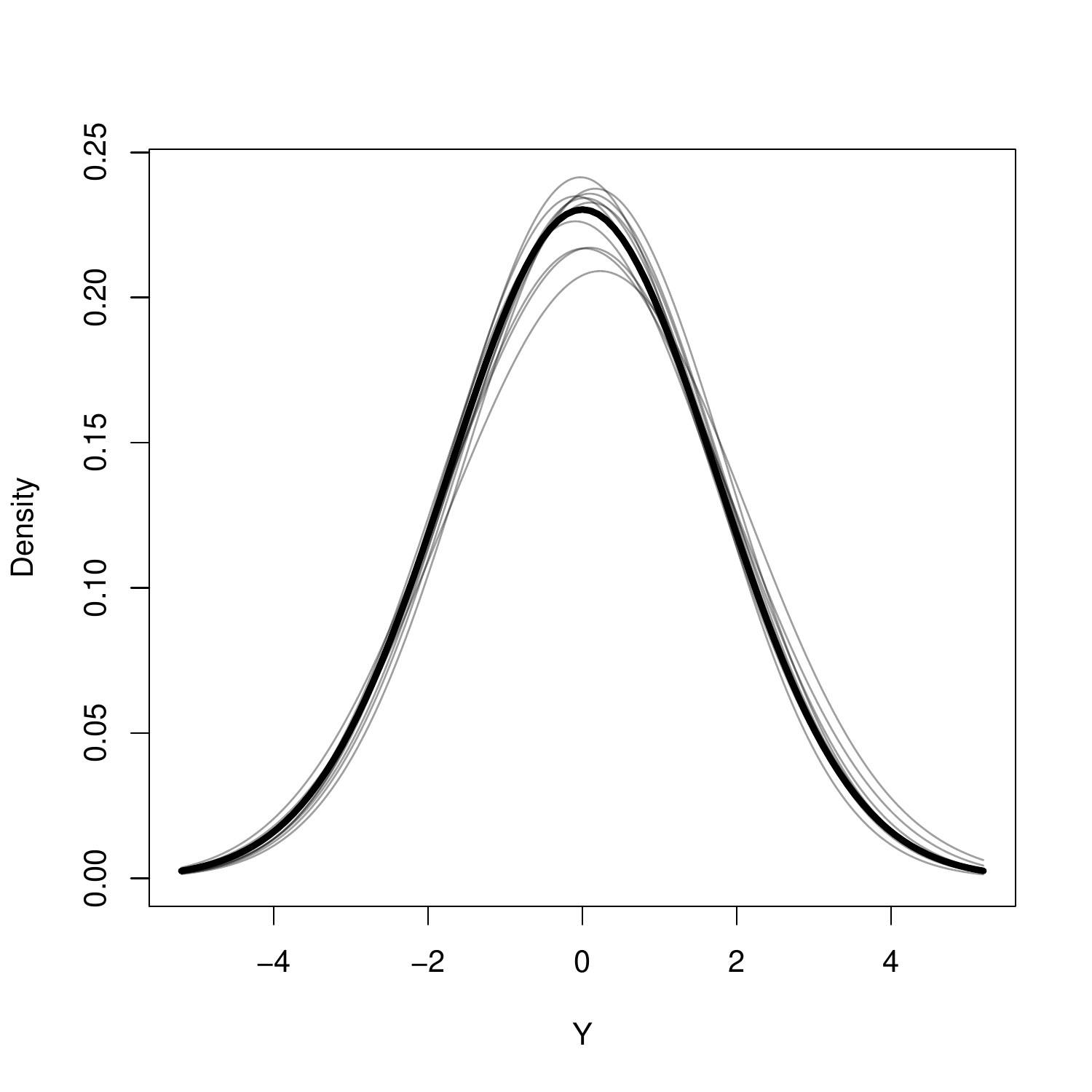} &
\includegraphics[width=0.3\textwidth]{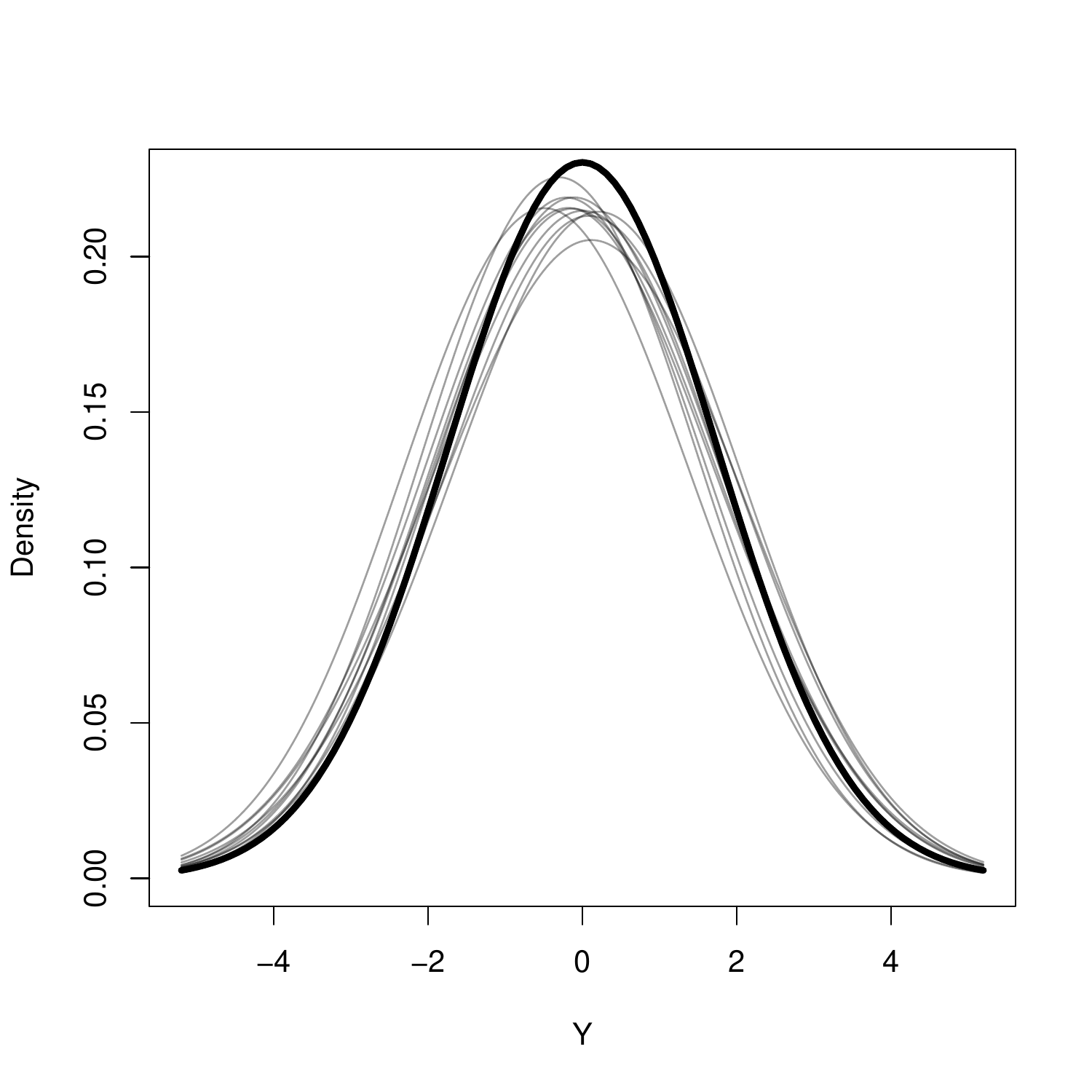}
\end{array}$
\end{tabular}
\end{center}
\vspace{-.3in}
\caption{Comparison of using $h_Y$ (optimal smoothing for $f_Y$) to $h_X$ (optimal smoothing for $f_X$) for the Normal density with $\sigma_{\epsilon}^2=2$. In a) we plot $f_Y$ and the .9 and .1 quantiles for density estimates using $h_Y$ (orange--dash) and $h_X$ (blue--dot). $h_X$ oversmooths the estimate, so the peak at $Y=0$ is biased low while the tails are biased high. In b) and c) we plot 10 density estimates using $h_Y$ and $h_X$ respectively. The qualitative conclusions that one is likely to draw from the density estimates are similar, regardless of whether $h_X$ or $h_Y$ is used.}
\label{fig:normal_confs_draws}
\end{figure}

We now study the Bimodal 1 density case with $\sigma_{\epsilon}^2 = .125$ and $n=50$. Here $\MISE(0)/MISE(h_Y)$ was highest out of all conditions tested in Table \ref{tab:50}. Following the procedure for generating Figure \ref{fig:normal_confs_draws}, we generate 100 samples of size $50$ from Bimodal 1. Using these 100 samples, we construct 100 density estimates using $h_Y$ and no smoothing. In Figure \ref{fig:bimodal1_confs_draws} a) we plot the .1 and .9 pointwise quantiles for these density estimates (orange--dashed for $h_Y$ and blue--dotted for no smoothing). We plot the true underlying density, $f_Y$ (i.e. the Bimodal 1 density convolved with $\phi_{.125}$), in black--solid.

No smoothing greatly overestimates the height of the mode at $Y=0$. The quantiles for the $h_Y$ density estimates are nearly contained within the quantiles for no smoothing across all values of $Y$. In Figure \ref{fig:bimodal1_confs_draws} b) and c) we plot 10 density estimates using $h_Y$ and no smoothing respectively. The density estimates using $h_Y$ (in b)) typically identify two modes close to the correct $Y$ values. In contrast the density estimates using no smoothing appear very undersmoothed (in c)). Several estimates have three or more modes and the mode heights are often far from the true value. In this case, not smoothing could have a significant impact on qualitative conclusions drawn from the density estimate. The results from Figures \ref{fig:normal_confs_draws} and \ref{fig:bimodal1_confs_draws} suggest that the qualitative impacts of not smoothing may be worse than smoothing using $h_X$.

\begin{figure}[h]
\begin{center}
\begin{tabular}{ccc}
$\begin{array}{ccc}
\multicolumn{1}{l}{\mbox{(a)}} & \multicolumn{1}{l}{\mbox{(b)}} & \multicolumn{1}{l}{\mbox{(c)}} \\ \\ \\ [-.5in]
\includegraphics[width=0.3\textwidth]{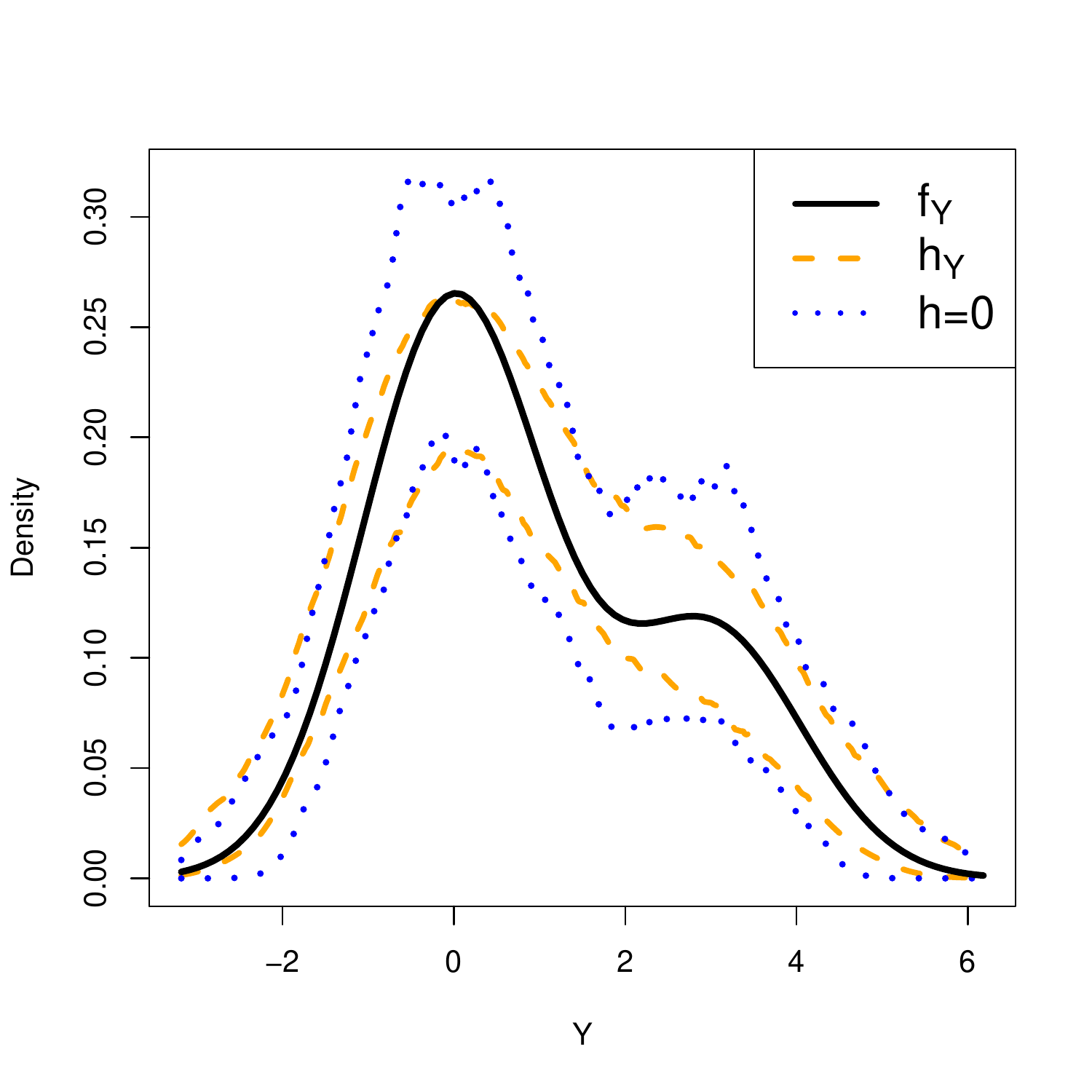} &
\includegraphics[width=0.3\textwidth]{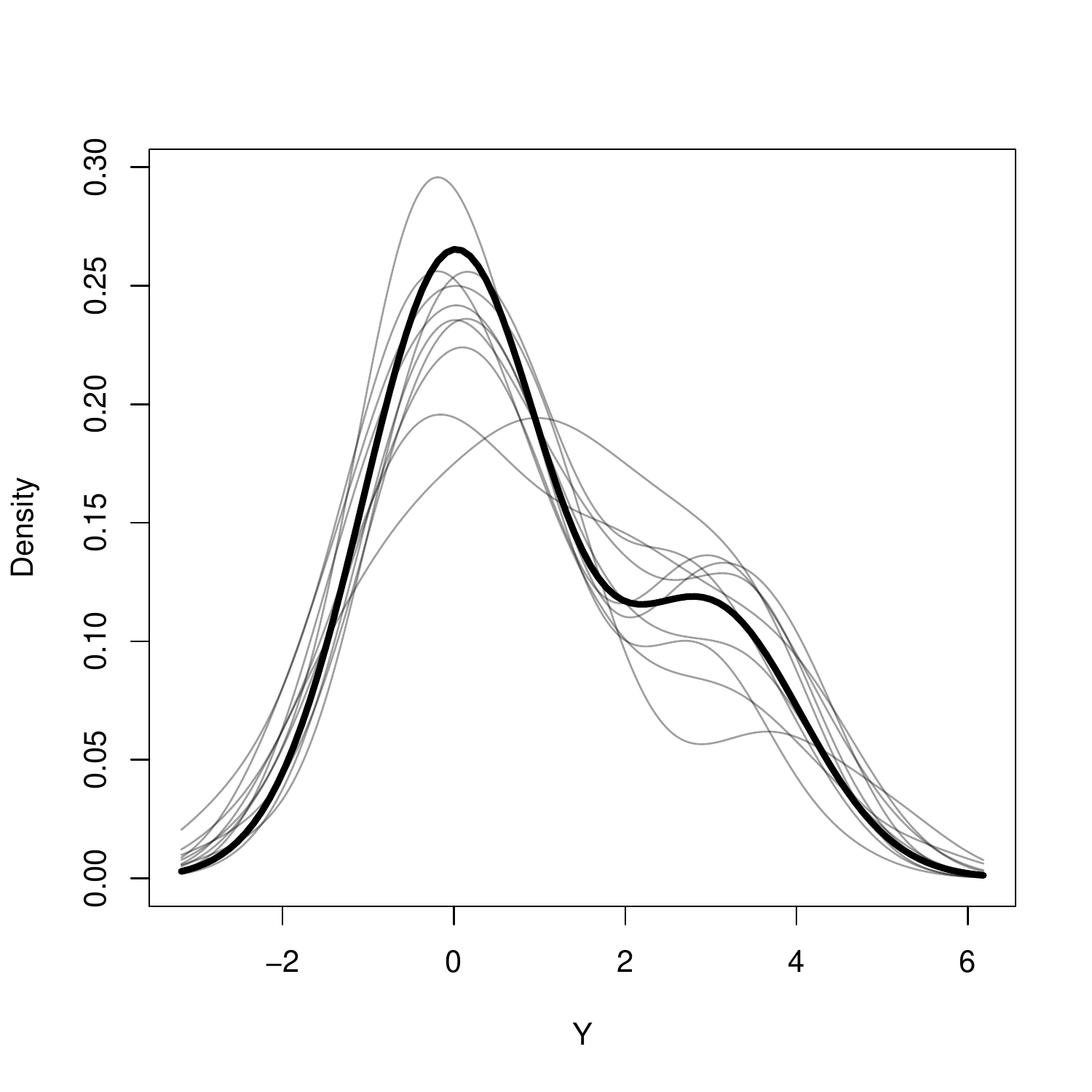} &
\includegraphics[width=0.3\textwidth]{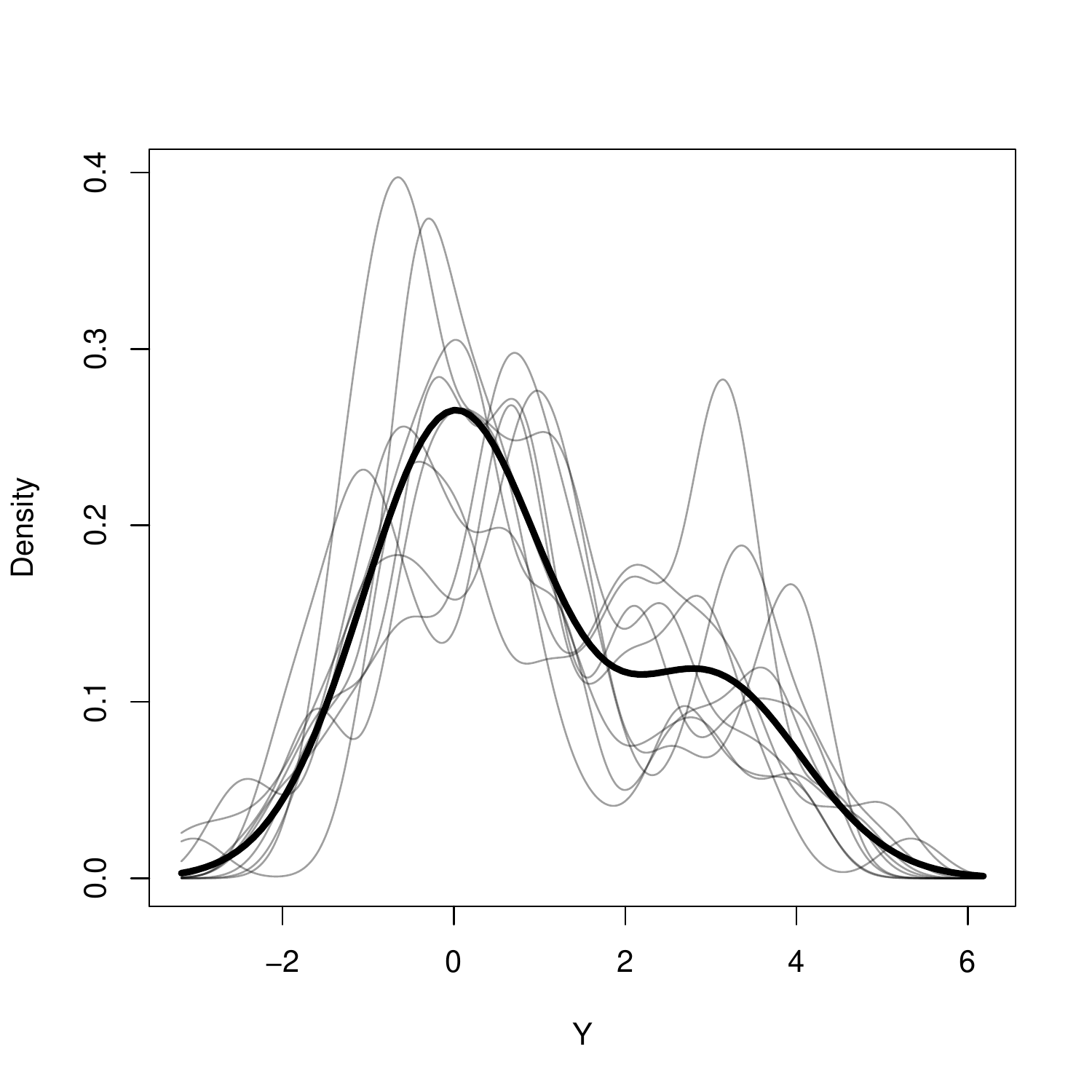}
\end{array}$
\end{tabular}
\end{center}
\vspace{-.3in}
\caption{Comparison of using $h_Y$ (optimal smoothing for $f_Y$) to no smoothing for the Bimodal 1 density with $\sigma_{\epsilon}^2=.125$. In a) we plot $f_Y$ (black--solid) and the .9 and .1 quantiles for density estimates using $h_Y$ (orange--dash) and no smoothing (blue--dot). The quantiles for no smoothing are wider than for $h_Y$ for most values of $Y$. In b) and c) we plot 10 density estimates using $h_Y$ and no smoothing respectively. The density estimates using no smoothing often have 3 modes. These modes are often not close to the true $Y$ value modes.}
\label{fig:bimodal1_confs_draws}
\end{figure}

\subsection{Convergence of Bandwidth to Asymptotic Approximation}
\label{sec:conv_rate}

We now study the rate of convergence of the asymptotically optimal bandwidth for estimating $f_Y$, $h_Y^*$ (see Equation \ref{eq:1dasymptopt}) to the exact optimal bandwidth $h_Y$. Fast convergence rates suggest that plug--in estimators could be effective for estimating $h_Y$. We pay particular attention to the relationship between convergence rate of $h_Y^*$ and the error variance $\sigma_{\epsilon}^2$.

For the four densities in Figure \ref{fig:densities} using error variances $\sigma_{\epsilon}^2 = 2,1,.5,.25,.125$ we compute the ratio between the exact optimal bandwidth ($h_Y$) and the asymptotically optimal bandwidth ($h_Y^*$). The exact optimal bandwidth is determined by finding the $h$ which minimizes Equation \ref{eq:exact_mise}. The asymptotically optimal bandwidth is computed using Equation \ref{eq:1dasymptopt}. We plot these ratios as a function of $n$ for the Normal and Trimodal densities in Figure \ref{fig:h_asymptotics} a) and b) respectively.

For the Normal density, the larger $\sigma_{\epsilon}^2$, the faster the convergence of the asymptotically optimal bandwidth to the actual optimal bandwidth. For example with $\sigma_{\epsilon}^2 = 2,1,.5$ and $n=100$, the exact optimal bandwidth is within 10\% of the asymptotic expression. This suggests that for a normal density, with moderate $n$ and error variance not too small, plug--in estimators for the asymptotic bandwidth may provide a good approximation to the bandwidth which minimizes the exact $\MISE$. The plots for Bimodal 1 and Bimodal 2 (not shown) closely resemble the Normal density.

For the Trimodal density, the relationship between $\sigma_{\epsilon}^2$ and the convergence rate of the asymptotically optimal bandwidth ($h_Y^*$) to exact optimal bandwidth ($h_Y$) is more complicated than for the Normal, Bimodal 1, and Bimodal 2 densities. Broadly, the asymptotics for $\sigma_{\epsilon}^2 = .25,.125$ take hold at larger $n$ than for $\sigma_{\epsilon}^2 = 2,1,.5$. Unlike for the normal case, the asymptotically optimal bandwidth is not always greater than the exact optimal bandwidth.

\begin{figure}[h]
\begin{center}
\begin{tabular}{cc}
$\begin{array}{cc}
\multicolumn{1}{l}{\mbox{(a)}} & \multicolumn{1}{l}{\mbox{(b)}} \\ \\ \\ [-.7in]
\includegraphics[width=0.45\textwidth]{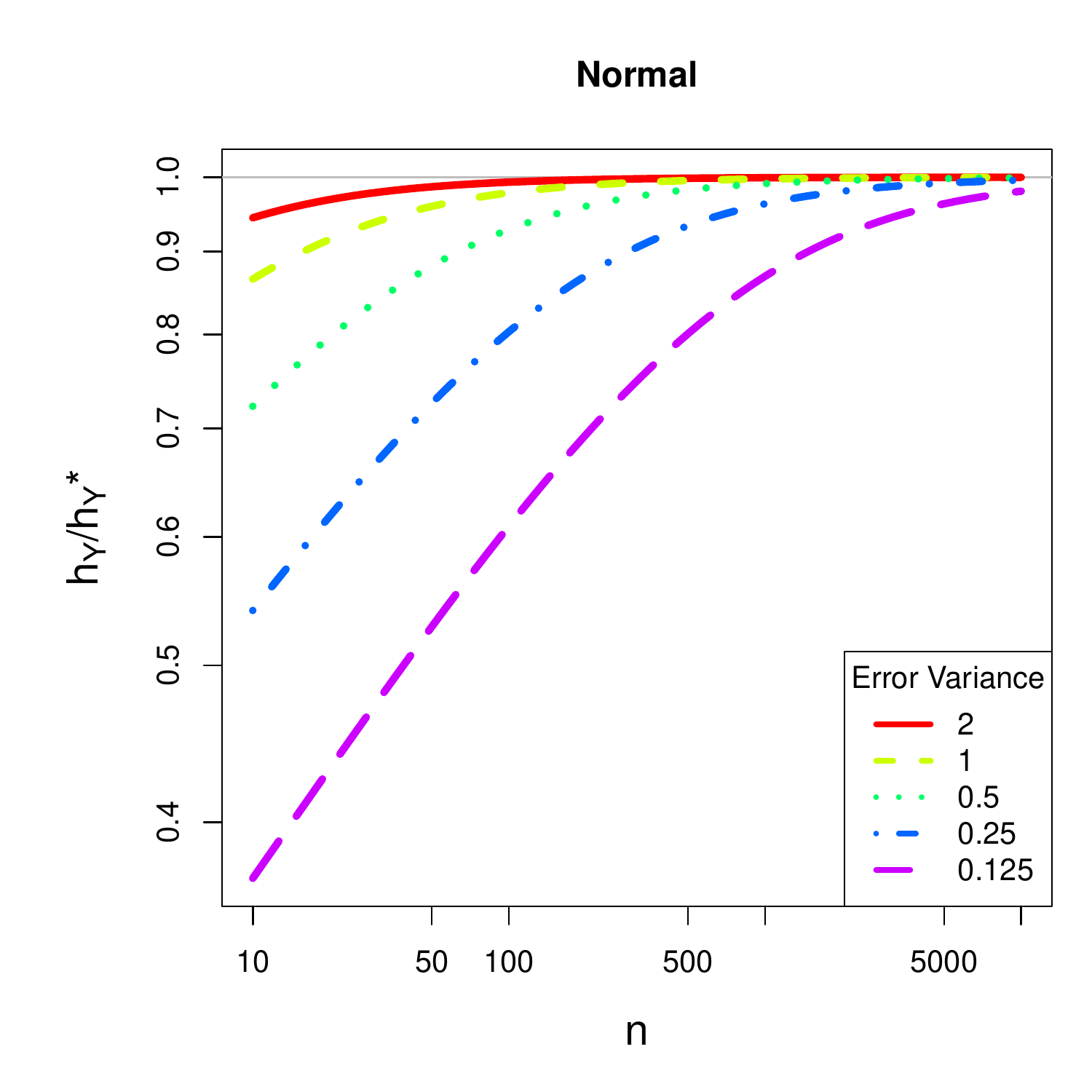} &
\includegraphics[width=0.45\textwidth]{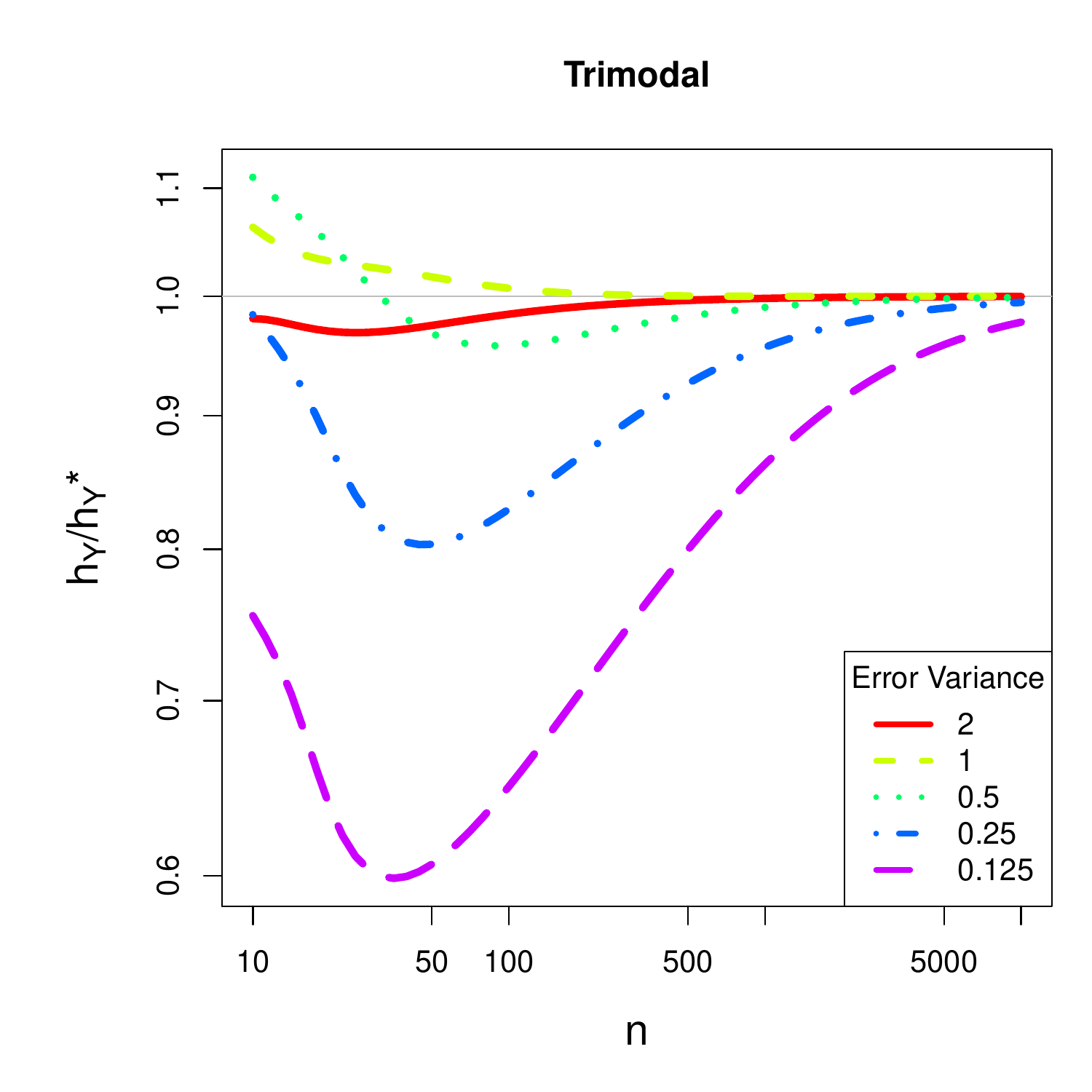}
\end{array}$
\end{tabular}
\end{center}
\caption{Ratio of exact optimal bandwidth to asymptotically optimal bandwidth ($h_Y / h_Y^*$) as a function of $n$ for the Normal (a) and Trimodal (b) densities. The convergence of this ratio to $1$ varies with $\sigma_{\epsilon}^2$ (the error variance). For the Normal density the relationship between $\sigma_{\epsilon}^2$ and the convergence is fairly simple, while for the Trimodal, the behavior is more complex.}
\label{fig:h_asymptotics}
\end{figure}

Certain aspects of the convergence rate behavior in Figure \ref{fig:h_asymptotics} may be explained by considering asymptotics in $\sigma_{\epsilon}^2$. At constant $n$, as $\sigma_{\epsilon}^2 \rightarrow 0$, $h_Y / h_Y^* \rightarrow 0$. This can be seen by considering the limiting values (in $\sigma_{\epsilon}^2$) of $h_Y$ and $h_Y^*$. Note that as $\sigma_{\epsilon}^2 \rightarrow 0$, $f_Y \rightarrow f_X$. Therefore $h_Y \rightarrow h_X$, a positive constant. In contrast, as $\sigma_{\epsilon}^2 \rightarrow 0$, $h_Y^* = \sqrt{\frac{2\int (\omega^T \Sigma_K \omega) d\nu(\omega)}{n\int (\omega^T \Sigma_K \omega)^2 d\mu(\omega)}} \rightarrow \infty$. This behavior is seen in Figure \ref{fig:h_asymptotics} a) where at fixed $n$, the smaller $\sigma_{\epsilon}^2$, the smaller $h_Y/h_Y^*$. For $n > 1000$, this relationship between $\sigma_{\epsilon}^2$ and $h_Y / h_Y^*$ is true for Trimodal (Figure \ref{fig:h_asymptotics} b)) as well. Consideration of asymptotic regimes in which $n \rightarrow \infty$ and $\sigma_{\epsilon}^2 \rightarrow 0$ together at some rate may help explain the behavior in Figure \ref{fig:h_asymptotics} better.

In cases where the asymptotically optimal bandwidth is far from the exact optimal bandwidth, estimators for the asymptotically optimal bandwidth (such as plug--in or ``rule--of--thumb'' estimators) may have poor performance in terms of minimizing risk. The simulation results in Figure \ref{fig:h_asymptotics} suggest that when $\epsilon$ is concentrated around $0$, $h_Y^*$ is far from $h_Y$. Thus plug--in estimators, which attempt to estimate $h_Y^*$), may be suboptimal in terms of minimizing $\MISE$. In Section \ref{sec:real_data} we observe this behavior with a ``rule--of--thumb'' estimator for $h_Y$ when $\epsilon$ has small variance.

\section{Estimator for $h_Y$ and Real Data Example}
\label{sec:real_data}

\subsection{Rule--of--Thumb Estimator for $h_Y$}
\cite{jones1996brief} define ``rule--of--thumb'' bandwidth selection procedures as any method which replaces unknown quantities in the asymptotically optimal bandwidth with estimated values based on a parametric family for the unknown density. We now propose a rule--of--thumb estimation method for $h_Y$. Recall from Equation \eqref{eq:1dasymptopt} that the asymptotically optimal bandwidth is
\begin{equation}
\label{eq:h_star_norm}
h^*_Y = \sqrt{\frac{2\int (\omega^T \Sigma_K \omega) d\nu(\omega)}{n\int (\omega^T \Sigma_K \omega)^2 d\mu(\omega)}} = \sqrt{\frac{2\int (\omega^T \Sigma_K \omega) |\widehat{f}_\epsilon(\omega)|^2(1 - |\widehat{f}_X(\omega)|^2)d\omega}{n\int (\omega^T \Sigma_K \omega)^2 |\widehat{f}_X(\omega)|^2|\widehat{f}_\epsilon(\omega)|^2 d\omega}}.
\end{equation}
We specialize to one dimension, so $\Sigma_K$ is a scalar and can be set to 1 without loss of generality. $h_Y^*$ depends on $|\widehat{f}_X(\omega)|^2$, which is unknown. We replace this quantity by assuming (solely for the purposes of bandwidth estimation) that $f_X$ is mean $0$ normal. In this case, $|\widehat{f}_X(\omega)|^2 = e^{-\sigma_X^2\omega^2}$ where $\sigma_X^2$ is the variance of $f_X$. $\sigma_X^2$ is estimated with $\widetilde{\sigma}^2_X$, the variance of the observations $X_1, \ldots, X_n$. Thus our rule--of--thumb bandwidth estimator for Berkson kernel density estimation is
\begin{equation}
  \label{eq:hy_estimate_gen}
\widetilde{h}_Y = \sqrt{\frac{2\int \omega^2 |\widehat{f}_\epsilon(\omega)|^2(1 - e^{-\widetilde{\sigma}_X^2\omega^2})d\omega}{n\int \omega^4 e^{-\widetilde{\sigma}_X^2\omega^2}|\widehat{f}_\epsilon(\omega)|^2 d\omega}}.
\end{equation}
For the case where $\epsilon$ is mean $0$ normal with variance $\sigma_{\epsilon}^2$, $|\widehat{f}_\epsilon(\omega)|^2 = e^{-\sigma_{\epsilon}^2\omega^2}$ and Equation \ref{eq:hy_estimate_gen} simplifies to
\begin{equation}
  \label{eq:hy_estimate}
\widetilde{h}_Y = \sqrt{\frac{4}{3n}\left[ \frac{(\widetilde{\sigma}_X^2 + \sigma_\epsilon^2)^{5/2}}{\sigma_\epsilon^2} - (\sigma_{\epsilon}^2 + \widetilde{\sigma}_X^2)\right]}.
\end{equation}

\subsection{Real Data Example}

We analyze data collected by \cite{ferris1979effects} concerning childhood exposure to \no, a known cause of respiratory illnesss. The goal is to determine the density of exposure to \no for children living in Watertown, Massachusetts. \cite{ferris1979effects} collected kitchen and bathroom concentrations of \no for 231 homes in Watertown. In this study, direct personal exposure to \no was not observed.

Using data collected in Portage, Wisconsin and the Netherlands, \cite{tosteson1989measurement} modeled log personal exposure to \no ($Y$) as a linear function of log kitchen ($\ln(W_k)$) and log bathroom ($\ln(W_b)$) concentrations plus random error. Specifically (see Table 1 of \cite{tosteson1989measurement})
\begin{equation*}
Y = 1.22 + 0.3\ln(W_k) + 0.33 \ln(W_b) + \epsilon
\end{equation*}
where $\epsilon \sim N(0,.06)$, independent of $W_k$ and $W_b$. Let $X = 1.22 + 0.3\ln(W_k) + 0.33 \ln(W_b)$. By assuming the same error model holds in Watertown as in Portage and the Netherlands (i.e. assuming portability of the error model), we can estimate log personal exposure density in Watertown as $X$ plus independent noise where we have 231 observed values of $X$.

We estimate the density of $Y$ using the three smoothing approaches described in earlier sections: smoothing to optimize estimation of $f_Y$, smoothing to optimize estimation of $f_X$ and no smoothing. For no smoothing we simply convolve the observations with the error density. For smoothing to optimize estimation of $f_X$ and $f_Y$ we must select a kernel $K$ and bandwidth estimation methods for $h_X$ and $h_Y$. We use a Gaussian kernel. For estimating $h_X$ we use ``Silverman's rule--of--thumb'', developed in \cite{deheuvels1977estimation} and \cite{silverman1986density},\footnote{This is the default bandwidth selection for the \begin{tt}density\end{tt} function in the software package \begin{tt}R\end{tt}, Version 3.01.}
\begin{equation*}
\widetilde{h}_X = \frac{0.9\min(\widetilde{IQR}_X/1.34,\widetilde{\sigma}_X)}{n^{1/5}}.
\end{equation*}
Here $\widetilde{IQR}$ and $\widetilde{\sigma}_X$ are the estimated inter--quartile range and standard deviation of $X$. For estimating $h_Y$ we use the rule--of--thumb estimator $\widetilde{h}_Y$ proposed in Equation \eqref{eq:hy_estimate}.

In addition to studying the case $\epsilon \sim N(0,.06)$, we construct density estimates when $\epsilon$ is normal with variance $0.6$ and $0.006$ in order to study robustness of the smoothing methods to different levels of Berkson error. In Figure \ref{fig:real_data} we plot the three estimators for a) $\epsilon \sim N(0,0.6)$, b) $\epsilon \sim N(0,0.06)$, and c) $\epsilon \sim N(0,0.006)$. In a) where $\sigma_{\epsilon}^2 = 0.6$ there is essentially no difference in the estimators. In b) where $\sigma^2_\epsilon = 0.06$ no smoothing results in a somewhat higher mode around $y=3$ than smoothing to optimize estimation of $f_X$ or $f_Y$. It appears unlikely that the choice of smoothing would affect qualitative conclusions in this case. In c) where $\sigma^2_\epsilon = 0.006$ no smoothing severely under--regularizes the density estimate. In particular the estimate has four modes. $\widetilde{h}_Y$ oversmooths the estimate of $f_Y$. This is likely due to the fact that for fixed $n$ as $\sigma_{\epsilon} \rightarrow 0$, $h^*_Y \rightarrow \infty$ while $h_Y \rightarrow h_X$ (see Subsection \ref{sec:conv_rate}). Since $\widetilde{h}_Y$ is an estimate of $h^*_Y$, it oversmooths the density estimate in this case. Overall, with $\epsilon \sim N(0,0.006)$, $\widetilde{h}_X$ produces the best density estimate (blue dashed line).

\begin{figure}[h]
\begin{center}
\begin{tabular}{ccc}
$\begin{array}{ccc}
\multicolumn{1}{l}{\mbox{(a)}} & \multicolumn{1}{l}{\mbox{(b)}} & \multicolumn{1}{l}{\mbox{(c)}} \\ \\ \\ [-.5in]
\includegraphics[width=0.3\textwidth]{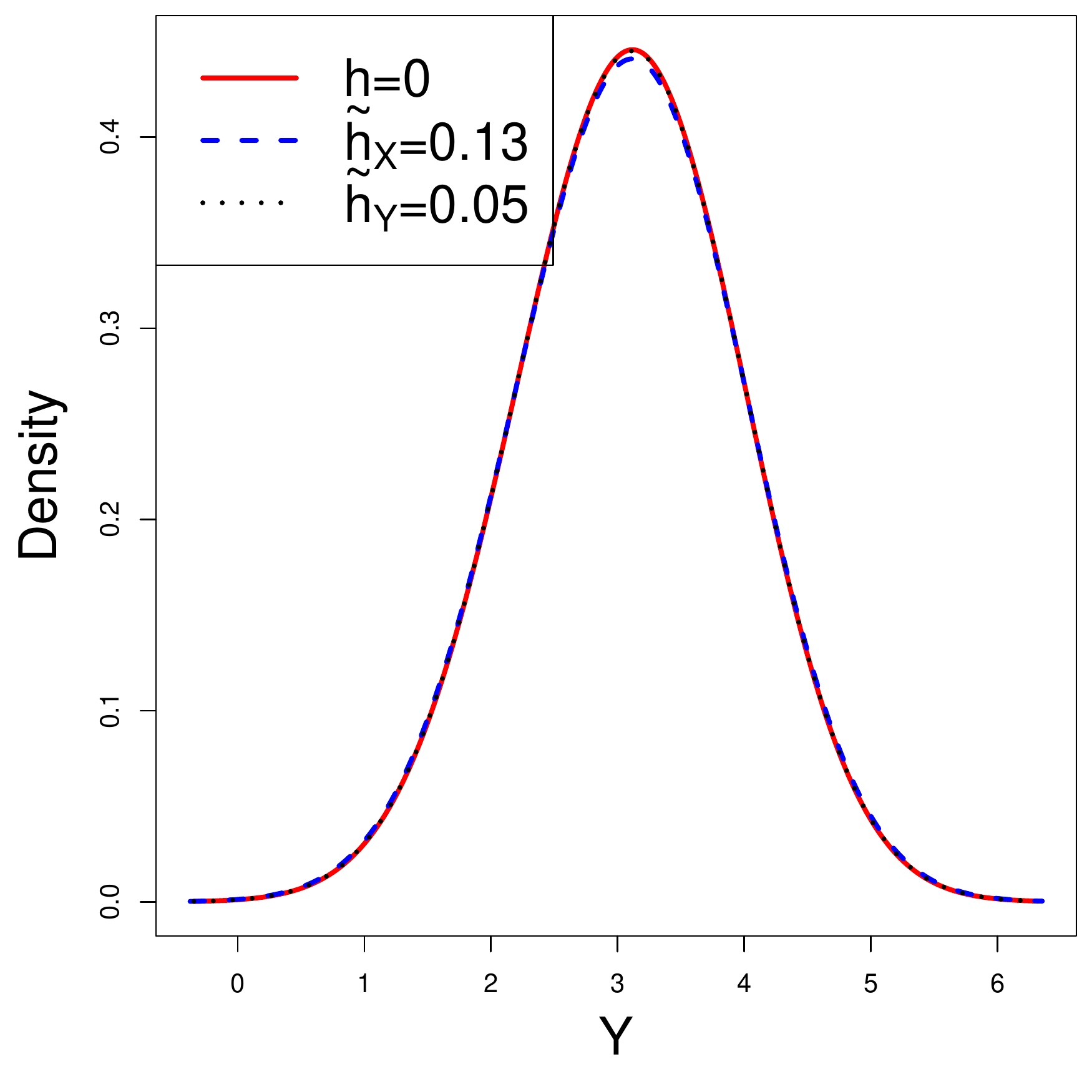} &
\includegraphics[width=0.3\textwidth]{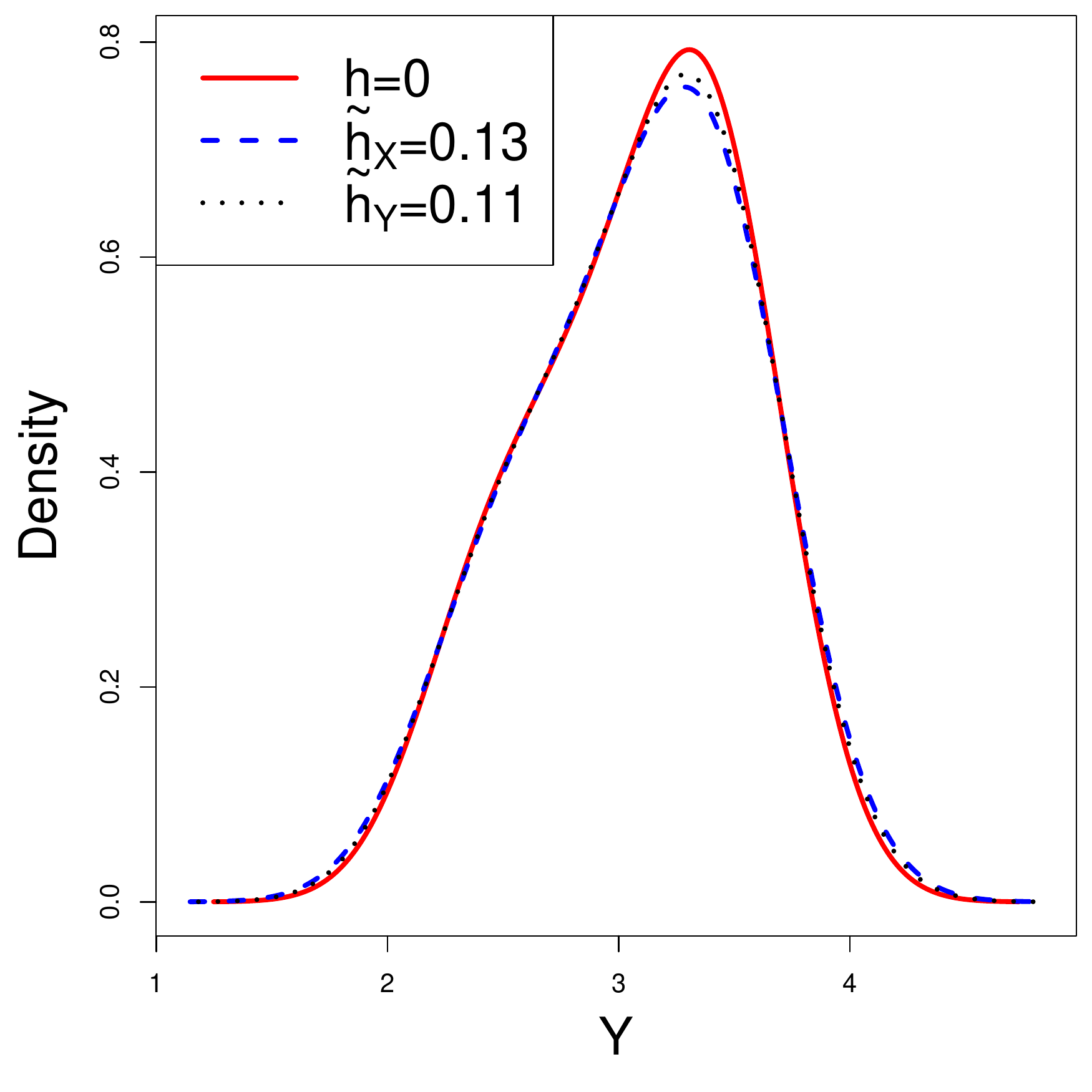} &
\includegraphics[width=0.3\textwidth]{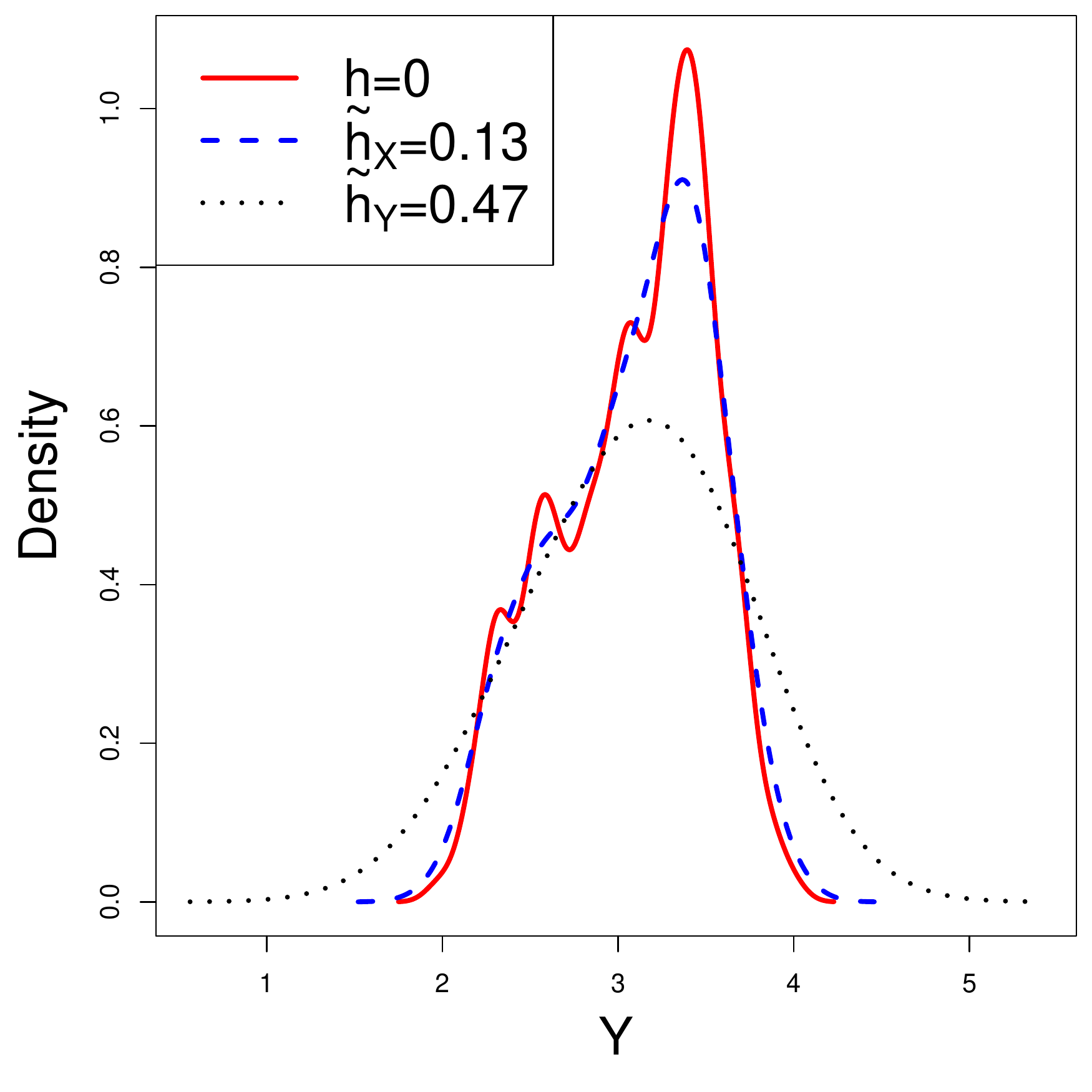}
\end{array}$
\end{tabular}
\end{center}
\vspace{-.3in}
\caption{Density estimates of log \no exposure for children living in Watertown using three different error variances. With error variances $\sigma^2_{\epsilon} = 0.6$ or $0.06$, plots a) and b) respectively, all three smoothing methods produce similar density estimates. In c), where $\sigma_{\epsilon}^2 = 0.006$, no smoothing under--regularizes the density estimate.}
\label{fig:real_data}
\end{figure}

\section{Conclusions}
\label{sec:conclusions}

In this work we compared different approaches to smoothing a density estimate subject to Berkson error. No smoothing (approach 3) achieved suboptimal asymptotic (at second order) and finite sample $\MISE$. This was especially evident when the error term $\epsilon$ was concentrated near 0. Smoothing to optimize estimation of $f_X$ resulted in suboptimal asymptotic (at first order) $\MISE$ rates. At finite samples, $h_X$ oversmoothed density estimates, particularly when the error variance was large. These effects grew worse in higher dimensions.

These results support using a bandwidth specifically chosen for estimation of $f_Y$. More work is needed to develop estimators of $h_Y$. The asymptotically optimal bandwidth $h_Y^*$ derived in Equation \eqref{eq:1dasymptopt} suggest one form for rule--of--thumb and plug--in type estimators. The simulations in Subsection \ref{sec:conv_rate} suggest that when the error is concentrated around $0$, using a bandwidth which estimates $h_Y^*$ (the asymptotic approximation to $h_Y$) may oversmooth the density estimate. The rule--of--thumb estimator for $h_Y$ we developed in Equation \eqref{eq:hy_estimate} displayed this behavior. Estimators for $h_Y$ based on cross--validation or the bootstrap may perform better under a wider range of possible error distributions and should be developed. The development of bandwidth estimators for standard kernel density estimation (see \cite{jones1996brief} for a review) suggest forms for such procedures.

\baselineskip=14pt
\section*{Acknowledgments}

Support from NSF grant 0941742 (Cyber-Enabled Discovery and Innovation), NSF grant DMS-0847647 (CAREER), and a fellowship from Citadel LLC are gratefully acknowledged. We thank Raymond Carroll for providing helpful comments on the manuscript and Len Stefanski for supplying the data set on children's \no exposure.

\baselineskip=14pt

\bibliographystyle{abbrvnat}
\bibliography{refs}

\clearpage\pagebreak\newpage
\pagestyle{fancy}
\fancyhf{}
\rhead{\bfseries\thepage}
\lhead{\bfseries SUPPLEMENTARY MATERIAL}
\begin{center}
{\LARGE{\bf Supplementary Material to\\ {\it Kernel Density Estimation with Berkson Error}}}
\end{center}

\vskip 2mm
\begin{center}
James P. Long\\
Department of Statistics, Texas A\&M University\\
3143 TAMU, College Station, TX 77843-3143\\
jlong@stat.tamu.edu\\
\hskip 5mm \\
Noureddine El Karoui \\
Department of Statistics, University of California, Berkeley\\
367 Evans Hall \# 3860, Berkeley, CA 94720-3860\\
nkaroui@stat.berkeley.edu\\
\hskip 5mm \\
John A. Rice \\
Department of Statistics, University of California, Berkeley\\
367 Evans Hall \# 3860, Berkeley, CA 94720-3860\\
rice@stat.berkeley.edu
\end{center}

\setcounter{equation}{0}
\setcounter{page}{1}
\setcounter{table}{1}
\setcounter{section}{0}
\renewcommand{\theequation}{S.\arabic{equation}}
\renewcommand{\thesection}{S.\arabic{section}}
\renewcommand{\thesubsection}{S.\arabic{section}.\arabic{subsection}}
\renewcommand{\thepage}{S.\arabic{page}}
\renewcommand{\thetable}{S.\arabic{table}}
\baselineskip=17pt

\section{Proofs of the Theorems}
\label{sec:proofs}
\subsection{Proof of Theorem \ref{thm:char_mise}}
\label{sec:char_mise}
We must show
\begin{equation*}
(2\pi)^p \MISE(H) = \int |1 - \widehat{K}(H\omega)|^2 d\mu(\omega) + \frac{1}{n}\int |\widehat{K}(H\omega)|^2 d\nu(\omega)
\end{equation*}
where
\begin{align*}
&d\mu(\omega) = |\widehat{f}_{\epsilon}(\omega)|^2|\widehat{f}_X(\omega)|^2 d\omega,\\
&d\nu(\omega) = |\widehat{f}_{\epsilon}(\omega)|^2 (1-|\widehat{f}_X(\omega)|^2) d\omega.
\end{align*}
Substituting for $d\mu(\omega)$ and $d\nu(\omega)$, it suffices to show that
\begin{equation}
\label{eq:suffices}
(2\pi)^p \MISE(H) = \int |\widehat{f}_{\epsilon}(\omega)|^2 \left(|1 - \widehat{K}(H\omega)|^2|\widehat{f}_X(\omega)|^2 + \frac{1}{n}|\widehat{K}(H\omega)|^2 (1-|\widehat{f}_X(\omega)|^2)\right)d\omega.
\end{equation}
$\ffyh, \widehat{f}_Y \in L_1$ by assumption. They are also in $L_2$ because they are characteristic functions and thus bounded. Under these conditions, the Plancherel theorem (see Theorem 1.8.8 on page 57 in \cite{ushakov1999selected}) states
\begin{equation}
\label{eq:time_freq}
\int (f_Y(y) - \fy)^2 dy = \frac{1}{(2\pi)^p} \int |\widehat{f}_Y(\omega) - \ffy|^2 d\omega.
\end{equation}
Let $\mathcal{P}_n$ be the product measure on $(X_1, \ldots, X_n)$. Using the definition of $\MISE(H)$, Equation \eqref{eq:time_freq},  and the facts $\widehat{f}_Y(\omega) = \widehat{f}_X(\omega)\widehat{f}_{\epsilon}(\omega)$ and $\ffy = \widehat{K}(H\omega)\widehat{f}_{\epsilon}(\omega)\ffx$, we have
\begin{align*}
\MISE(H) &= \E_{\mathcal{P}_n}\int \left(f_Y(y) - \fy \right)^2 dy \\
&= \frac{1}{(2\pi)^p}\E_{\mathcal{P}_n}\int |\widehat{f}_Y(\omega) - \ffy|^2 d\omega\\
&= \frac{1}{(2\pi)^p}\E_{\mathcal{P}_n}\int |\widehat{K}(H\omega)\widehat{f}_{\epsilon}(\omega)\ffx - \widehat{f}_X(\omega)\widehat{f}_{\epsilon}(\omega)|^2 d\omega\\
&= \frac{1}{(2\pi)^p}\E_{\mathcal{P}_n}\int |\widehat{f}_{\epsilon}(\omega)|^2|\ffx\widehat{K}(H\omega) - \widehat{f}_X(\omega)|^2 d\omega.
\end{align*}
Note that the integrand is a non-negative function, so we move the expectation inside the integral using Fubini's Theorem. We have
\begin{equation*}
(2\pi)^p \MISE(H) = \int |\widehat{f}_{\epsilon}(\omega)|^2\E_{\mathcal{P}_n}|\ffx\widehat{K}(H\omega) - \widehat{f}_X(\omega)|^2  d\omega.
\end{equation*}
Noting that it is sufficient to show Equation \eqref{eq:suffices} holds, all that is left is to show is
\begin{equation*}
\E_{\mathcal{P}_n}|\ffx\widehat{K}(H\omega) - \widehat{f}_X(\omega)|^2 = |1 - \widehat{K}(H\omega)|^2|\widehat{f}_X(\omega)|^2 + \frac{1}{n}|\widehat{K}(H\omega)|^2 (1-|\widehat{f}_X(\omega)|^2).
\end{equation*}
This identity is shown in the proof of Theorem 1.4 on page 22 in \cite{tsybakov2009introduction}.
\hfill $\qed$

\subsection{Proof of Theorem \ref{thm:multi_dim}}
\label{sec:multi_d}
Recall that we are working under Assumptions \ref{assump:kernel} and \ref{assump:density}. This proof is divided into three parts. In \textbf{Part 1} we show $\widehat{f}_Y, \ffyh \in L_1$, which satisfies the conditions for Theorem \ref{thm:char_mise} and implies
\begin{equation}
\label{eq:known_mise}
(2\pi)^p \MISE(H) = \int |1 - \widehat{K}(H\omega)|^2 d\mu(\omega) + \frac{1}{n}\int |\widehat{K}(H\omega)|^2 d\nu(\omega).
\end{equation}
In \textbf{Part 2} we expand the first term of the right hand side of Equation \eqref{eq:known_mise} to show
\begin{equation}
\label{eq:bias2}
\int |1 - \widehat{K}(H\omega)|^2 d\mu(\omega) = \left(\frac{1}{4} \int (\omega^TH^T\Sigma_KH\omega)^2d\mu(\omega)\right) (1 + O(||H||_{\infty}^2)).
\end{equation}
In \textbf{Part 3} we expand the second term of the right hand side of Equation \eqref{eq:known_mise} to show
\begin{equation}
\label{eq:var2}
\frac{1}{n}\int |\widehat{K}(H\omega)|^2 d\nu(\omega) = \frac{1}{n}\int d\nu(\omega) - \left(\frac{1}{n}\int (\omega^T H^T \Sigma_K H \omega) d\nu(\omega)\right)(1 + O(||H||_{\infty}^2)).
\end{equation}
Summing Equations \eqref{eq:bias2} and \eqref{eq:var2} we have the result
\begin{align*}
&(2\pi)^p \MISE(H) \\
&= \frac{1}{n}\int d\nu(\omega) + \\
&+\left(\frac{1}{4} \int (\omega^TH^T \Sigma_K H\omega)^2d\mu(\omega) - \frac{1}{n}\int (\omega^T H^T \Sigma_K H \omega) d\nu(\omega)\right)(1 + O(||H||_{\infty}^2)).
\end{align*}

\vspace{.1in}
\noindent
\textbf{Part 1: $\widehat{f}_Y, \ffyh \in L_1$}\\
Note that since the modulus of a characteristic function is bounded by 1
\begin{align*}
&|\widehat{f}_Y(\omega)| = |\widehat{f}_X(\omega)\widehat{f}_{\epsilon}(\omega)| \leq |\widehat{f}_{\epsilon}(\omega)|,\\
&|\ffy| = |\widehat{K}(H\omega)\widehat{f}_{\epsilon}(\omega)\ffx| \leq |\widehat{f}_{\epsilon}(\omega)|.
\end{align*}
$\widehat{f}_{\epsilon} \in L_1$ by Assumption \eqref{assump:ep_l1}, implying $\widehat{f}_Y,\ffyh \in L_1$.

\vspace{.1in}
\noindent
\textbf{Part 2: Bias}\\
By Lemma \ref{lemma:taylor_error_bound} on \page \pageref{lemma:taylor_error_bound} there exists $R$ satisfying
\begin{equation}
\label{eq:remainder_bound}
|R(\omega)| \leq C ||\omega||_{\infty}^4
\end{equation}
such that
\begin{equation}
\label{eq:khatexpansion}
  \widehat{K}(\omega) = 1 - \frac{\omega^T\Sigma_K \omega}{2} + R(\omega).
\end{equation}
Note that the kernel $K$ is symmetric so $\widehat{K}$ and $R$ are real valued functions.
\begin{align}
  \int |1 - \widehat{K}(H\omega)|^2 d\mu(\omega) =& \int \left|\frac{\omega^TH^T\Sigma_KH\omega}{2} - R(H\omega)\right|^2 d\mu(\omega) \nonumber \\
=&\frac{1}{4} \int (\omega^TH^T\Sigma_KH\omega)^2d\mu(\omega) \nonumber \\
&-\int R(H\omega) (\omega^TH^T\Sigma_KH\omega)d\mu(\omega) \label{eq:bias_small1}\\
&+ \int R(H\omega)^2d\mu(\omega). \label{eq:bias_small2}
\end{align}
We have split the integrals formally. We now show that Expressions \eqref{eq:bias_small1} and \eqref{eq:bias_small2} are $O(||H||_{\infty}^6)$ by bounding their integrands.  Using the bound $R(\omega) \leq C||\omega||_{\infty}^4$ (Equation \eqref{eq:remainder_bound}), for some $E$ we have
\begin{align*}
&|R(H\omega) (\omega^TH^T\Sigma_KH\omega)| \leq C||H\omega||_{\infty}^4 ||\omega^TH^T\Sigma_KH\omega||_{\infty} \leq E ||H||_{\infty}^6||\omega||_{\infty}^6,\\
&|R(H\omega)^2| \leq C^2||H\omega||_{\infty}^8 \leq E ||H||_{\infty}^8||\omega||_{\infty}^8.
\end{align*}
Using the definition of $d\mu(\omega)$ and the fact $\int ||\omega||_{\infty}^8 |\widehat{f}_{\epsilon}(\omega)|d\omega < \infty$ (Assumption \eqref{assump:1ep}) we have
\begin{equation*}
\int ||\omega||_{\infty}^8 d\mu(\omega) = \int ||\omega||_{\infty}^8 |\widehat{f}_X(\omega)|^2|\widehat{f}_{\epsilon}(\omega)|^2 d\omega \leq \int ||\omega||_{\infty}^8 |\widehat{f}_{\epsilon}(\omega)|^2d\omega < \infty.
\end{equation*}
So Expressions \eqref{eq:bias_small1} and \eqref{eq:bias_small2} are $O(||H||^6)$ and $O(||H||^8)$ respectively. Thus
\begin{equation*}
  \int |1 - \widehat{K}(H\omega)|^2 d\mu(\omega) = \left(\frac{1}{4} \int (\omega^TH^T\Sigma_KH\omega)^2d\mu(\omega)\right) (1 + O(||H||_{\infty}^2)).
\end{equation*}

\vspace{.1in}
\noindent
\textbf{Part 3: Variance}
Using the expansion of $\widehat{K}$ in Equation \eqref{eq:khatexpansion} we have
\begin{equation*}
\frac{1}{n} \int |\widehat{K}(H\omega)|^2d\nu(\omega) = \frac{1}{n}\int \left|1 - \frac{\omega^T H^T \Sigma_K H \omega}{2} + R(H\omega)\right|^2 d\nu(\omega).
\end{equation*}
Expanding the right hand side we have
\begin{align}
\frac{1}{n}\int \left|1 - \frac{\omega^T H^T \Sigma_K H \omega}{2} + R(H\omega)\right|^2 d\nu(\omega)=&\frac{1}{n}\Big(\int d\nu(\omega) \\
&- \int (\omega^T H^T \Sigma_K H \omega) d\nu(\omega)\\
&+ \frac{1}{4}\int (\omega^T H^T \Sigma_K H \omega)^2 d\nu(\omega) \label{eq:vsmall1}\\
&- \int R(H\omega)(\omega^T H^T \Sigma_K H \omega) d\nu(\omega) \label{eq:vsmall2}\\
&+ 2\int R(H\omega) d\nu(\omega)\label{eq:vsmall3}\\
&+ \int R^2(H\omega) d\nu(\omega)\Big).\label{eq:vsmall4}
\end{align}
We have split the integral formally. Using the bound $R(\omega) \leq C||\omega||_{\infty}^4$ (Equation \eqref{eq:remainder_bound}) we bound the integrands of Expressions \eqref{eq:vsmall1}, \eqref{eq:vsmall2}, \eqref{eq:vsmall3}, and \eqref{eq:vsmall4}. For some $F$ we have
\begin{align*}
&|(\omega^T H^T \Sigma_K H \omega)^2| \leq F||\omega||_{\infty}^4||H||_{\infty}^4,\\
&|R(H\omega)(\omega^T H^T \Sigma_K H \omega)| \leq F||\omega||_{\infty}^6||H||_{\infty}^6,\\
&|R(H\omega)| \leq F||\omega||_{\infty}^4||H||_{\infty}^4,\\
&|R^2(H\omega)| \leq F||\omega||_{\infty}^8||H||_{\infty}^8.
\end{align*}
Note that by the definition of $d\nu(\omega)$ and the fact $\int ||\omega||_{\infty}^8 |f_{\epsilon}(\omega)|^2d\omega < \infty$ (Assumption \eqref{assump:1ep}) we have
\begin{equation*}
\int ||\omega||_{\infty}^8 d\nu(\omega) = \int ||\omega||_{\infty}^8 |\widehat{f}_{\epsilon}(\omega)|^2 d\omega - \int ||\omega||_{\infty}^8 |\widehat{f}_{\epsilon}(\omega)|^2 |\widehat{f}_X(\omega)|^2  d\omega < \infty.
\end{equation*}
So Expressions \eqref{eq:vsmall1}, \eqref{eq:vsmall2}, \eqref{eq:vsmall3}, and \eqref{eq:vsmall4} are all integrable and $O(||H||_{\infty}^4)$. Thus
\begin{equation*}
\frac{1}{n} \int |\widehat{K}(H\omega)|^2d\nu(\omega) = \frac{1}{n}\int d\nu(\omega) - \left(\frac{1}{n}\int (\omega^T H^T \Sigma_K H \omega) d\nu(\omega)\right)(1 + O(||H||_{\infty}^2)).
\end{equation*}
\hfill $\qed$

\subsection{Proof of Theorem \ref{thm:normal_mise}}
\label{sec:normal_mise_proof}
Recall that $K = \phi_{\Sigma_K}$, $f_{\epsilon} = \phi_{\Sigma_{\epsilon}}$, and 
\begin{equation}
\label{eq:fx_norm_mix}
f_X(x) = \sum_{j=1}^m \w_j \phi_{\Sigma_j}(x - \mu_j).
\end{equation}
Let $\w = (\w_1, \ldots, \w_m)$, $S = H^T \Sigma_K H$, and $\Omega_a$ for $a \in \{0,1,2\}$ be a $m \times m$ matrix with $j,j'$ entry equal to 
\begin{equation}
\label{eq:def_omegaa}
\phi_{aS + 2\Sigma_{\epsilon} + \Sigma_{j} + \Sigma_{j'}}(\mu_j - \mu_{j'}).
\end{equation}
In \textbf{Variance} we show
\begin{equation}
\label{eq:var_normal}
\int \Var(\fy) dy = \frac{1}{n} \left(\phi_{2S + 2\Sigma_\epsilon}(0) - \w^T \Omega_2 \w\right).
\end{equation}
In \textbf{Bias} we show
\begin{equation}
\label{eq:bias_normal}
\int \left(\E[\fy] - f_{Y}(y)\right)^2dy = \w^T (\Omega_2 - 2\Omega_1 + \Omega_0)\w.
\end{equation}
By the bias--variance decomposition of the $\MISE$, we can sum Equation \eqref{eq:var_normal} and \eqref{eq:bias_normal} to obtain the result
\begin{equation*}
\MISE(H) = \frac{1}{n} \phi_{2S + 2\Sigma_\epsilon}(0) + \w^T ((1 - n^{-1}) \Omega_2 - 2\Omega_1 + \Omega_0)\w.
\end{equation*}
First note that since $\epsilon$ and $K_H$ are both normal, the estimator $\fyh$ has a simple form, specifically
\begin{equation}
\label{eq:simple_est}
\fy = \frac{1}{n}\sum_{i=1}^n\phi_{S+\Sigma_{\epsilon}}(y - X_i).
\end{equation}
Second note that (see e.g., A.2 on \page 527 in \cite{wand1993comparison}) for any two covariance matrices $\Sigma$ and $\Sigma'$ and mean vectors $\mu$ and $\mu'$ we have
\begin{equation}
\label{eq:norm_int_id}
\int \phi_\Sigma(x - \mu)\phi_{\Sigma'}(x - \mu')dx = \phi_{\Sigma + \Sigma'}(\mu - \mu').
\end{equation}

\vspace{.1in}
\noindent
\textbf{Variance:}
Using Equation \eqref{eq:simple_est}, we have
\begin{equation}
\label{eq:var1234}
\int \Var(\fy) dy = \frac{1}{n}\left(\int \E[\phi_{S + \Sigma_\epsilon}^2(y - X_1)] dy - \int \E[\phi_{S + \Sigma_\epsilon}(y - X_1)]^2 dy \right).
\end{equation}
We now simplify each term in the parenthesis on the right hand side of this equation. Using Equation \eqref{eq:norm_int_id} for the last equality, for the first term on the right hand side of Equation \eqref{eq:var1234} we have
\begin{align*}
\int \E[\phi_{S + \Sigma_\epsilon}^2(y - X_1)] dy &=  \int \int \phi_{S + \Sigma_\epsilon}^2(y - x) f_X(x)dx dy  \\
&= \int \phi_{S + \Sigma_\epsilon}^2(y) dy\\
&= \phi_{2S + 2\Sigma_\epsilon}(0).
\end{align*}
Recalling the representation of $f_X$ in Equation \eqref{eq:fx_norm_mix}, the identity in Equation \eqref{eq:norm_int_id}, and the definition of $\Omega_2$ in Equation \eqref{eq:def_omegaa}, for the second term on the right hand side of Equation \eqref{eq:var1234} we have
\begin{align*}
\int \E[\phi_{S + \Sigma_\epsilon}(y - X_1)]^2dy &=\int \left( \int \phi_{S + \Sigma_{\epsilon}}(y - x) f_X(x) dx\right)^2dy\\
&=\int \left( \int \phi_{S + \Sigma_{\epsilon}}(y - x) \left( \sum_{j=1}^m \w_j \phi_{\Sigma_j}(x - \mu_j) \right) dx\right)^2dy\\
&= \int \left( \sum_{j=1}^m \w_j \phi_{S + \Sigma_{\epsilon} + \Sigma_j}(y - \mu_j)\right)^2 dy\\
&= \sum_{j=1}^m \sum_{j'=1}^m  \w_j \w_{j'}  \int\phi_{S + \Sigma_{\epsilon} + \Sigma_j}(y - \mu_j)\phi_{S + \Sigma_{\epsilon} + \Sigma_{j'}}(y - \mu_{j'}) dy\\
&= \sum_{j=1}^m \sum_{j'=1}^m  \w_j \w_{j'}  \phi_{2S + 2\Sigma_{\epsilon} + \Sigma_j + \Sigma_{j'}}(\mu_j - \mu_{j'})\\
&=\w^T \Omega_2 \w.
\end{align*}
Hence
\begin{equation*}
\int \Var(\fy) dy = \frac{1}{n} \left(\phi_{2S + 2\Sigma_\epsilon}(0) - \w^T \Omega_2 \w\right)
\end{equation*}

\vspace{.1in}
\noindent
\textbf{Bias:}
Recalling $f_{\epsilon} = \phi_{\Sigma_{\epsilon}}$ and the representations of $f_X$ and $\fyh$ in Equations \eqref{eq:fx_norm_mix} and \eqref{eq:simple_est}, note
\begin{align*}
&f_Y(y) = \int f_X(y - \epsilon)f_{\epsilon}(\epsilon)d\epsilon = \sum_{j=1}^m \w_j \int \phi_{\Sigma_j}(y - \epsilon - \mu_j)\phi_{\Sigma_\epsilon}(\epsilon)d\epsilon = \sum_{j=1}^m \w_j \phi_{\Sigma_j + \Sigma_\epsilon}(y - \mu_j),\\
&\E[\fy] = \int \phi_{S + \Sigma_{\epsilon}}(y - x) \sum_{j=1}^m \w_j \phi_{\Sigma_j}(x - \mu_j) dx = \sum_{j=1}^m \w_j \phi_{S + \Sigma_{j} + \Sigma_{\epsilon}}(y - \mu_j).
\end{align*}
Using these identities and the definition of $\Omega_a$ (Equation \eqref{eq:def_omegaa}), for the integrated squared bias we have
\begin{align*}
\int \left(\E[\fy] - f_{Y}(y)\right)^2dy &= \int \left( \sum_{j=1}^m \w_j \left( \phi_{S + \Sigma_j + \Sigma_{\epsilon}}(y-\mu_j) -  \phi_{\Sigma_j + \Sigma_{\epsilon}}(y-\mu_j)\right)\right)^2dy\\
&= \int \sum_{j=1}^m \sum_{j'=1}^m \w_j \w_{j'}\Big( \phi_{S + \Sigma_j + \Sigma_{\epsilon}}(y-\mu_j)\phi_{S + \Sigma_{j'} + \Sigma_{\epsilon}}(y-\mu_{j'}) \\
&- \phi_{S + \Sigma_j + \Sigma_{\epsilon}}(y-\mu_j)\phi_{\Sigma_{j'} + \Sigma_{\epsilon}}(y-\mu_{j'})\\
&- \phi_{S + \Sigma_{j'} + \Sigma_{\epsilon}}(y-\mu_{j'})\phi_{\Sigma_{j} + \Sigma_{\epsilon}}(y-\mu_{j})\\
&+\phi_{\Sigma_j + \Sigma_{\epsilon}}(y-\mu_j)\phi_{\Sigma_{j'} + \Sigma_{\epsilon}}(y-\mu_{j'})\Big)dy\\
&= \sum_{j=1}^m \sum_{j'=1}^m \w_j\w_{j'}\Big(\phi_{2S+2\Sigma_{\epsilon} + \Sigma_j + \Sigma_{j'}}(\mu_j - \mu_{j'})\\
& - 2\phi_{S + 2\Sigma_{\epsilon} + \Sigma_j + \Sigma_{j'}}(\mu_j - \mu_{j'}) + \phi_{2\Sigma_{\epsilon} + \Sigma_j + \Sigma_{j'}}(\mu_j - \mu_{j'})\Big)\\
&= \w^T (\Omega_2 - 2\Omega_1 + \Omega_0)\w.
\end{align*}

\subsection{Lemmas}
\begin{Lem}
\label{lemma:taylor_error_bound}
Under Assumptions \ref{assump:kernel}, $K$ is a symmetric density function in $\mathbb{R}^p$ with a characteristic function $\widehat{K}$ that is four times continuously differentiable. Let $\Sigma_K$ be the variance of $K$. We Taylor expand $\widehat{K}$ around $0$, obtaining
\begin{equation*}
  \widehat{K}(\omega) = 1 - \frac{\omega^T\Sigma_K \omega}{2} + R(\omega).
\end{equation*}
There exists $C$ such that for any $\omega$
\begin{equation*}
R(\omega) \leq C ||\omega||_{\infty}^4.
\end{equation*}
\end{Lem}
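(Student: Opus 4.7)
The plan is to apply the multivariate Taylor theorem of order four to $\widehat{K}$ around the origin, identify the low-order coefficients using the density and symmetry structure of $K$, and then bound the remainder uniformly using boundedness of the fourth-order partial derivatives of $\widehat{K}$.

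First, since $\widehat{K}$ is four times continuously differentiable by Assumption~\ref{assump:1kernel}, Taylor's theorem (Lagrange form) yields
\begin{equation*}
\widehat{K}(\omega) = \widehat{K}(0) + \sum_j \partial_j \widehat{K}(0) \omega_j + \frac{1}{2}\sum_{j,k}\partial_j\partial_k \widehat{K}(0)\omega_j\omega_k + \frac{1}{6}\sum_{j,k,l}\partial_j\partial_k\partial_l \widehat{K}(0)\omega_j\omega_k\omega_l + R(\omega),
\end{equation*}
where $R(\omega)$ is a weighted sum of fourth partial derivatives of $\widehat{K}$ evaluated at some intermediate point $\xi$ on the segment from $0$ to $\omega$, each multiplied by a product of four coordinates of $\omega$. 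Next I would identify the first three terms. Because $K$ is a density, $\widehat{K}(0) = 1$. Four-times differentiability of $\widehat{K}$ forces $K$ to possess a finite fourth absolute moment, so one may differentiate under the integral sign and obtain $\partial_{j_1}\cdots\partial_{j_r}\widehat{K}(\omega) = i^r \int x_{j_1}\cdots x_{j_r} e^{i\omega^T x}K(x)\,dx$ for $r \le 4$. Then the symmetry of $K$ makes all odd-order moments vanish, so $\partial_j\widehat{K}(0) = 0$ and $\partial_j\partial_k\partial_l\widehat{K}(0) = 0$; meanwhile $\partial_j\partial_k\widehat{K}(0) = -\int x_j x_k K(x)\,dx = -(\Sigma_K)_{jk}$. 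Substituting gives the claimed expansion with $R(\omega)$ equal to the pure fourth-order remainder.

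For the uniform bound, the key point is that every fourth partial derivative of $\widehat{K}$ is bounded in modulus on all of $\mathbb{R}^p$: from the differentiated formula above,
\begin{equation*}
\bigl|\partial_{j_1}\partial_{j_2}\partial_{j_3}\partial_{j_4}\widehat{K}(\xi)\bigr| \le \int |x_{j_1} x_{j_2} x_{j_3} x_{j_4}| K(x)\,dx =: M_{j_1 j_2 j_3 j_4} < \infty,
\end{equation*}
independent of $\xi$. Combining with the Lagrange form and $|\omega_{j_1}\omega_{j_2}\omega_{j_3}\omega_{j_4}| \le \|\omega\|_\infty^4$,
\begin{equation*}
|R(\omega)| \le \frac{1}{24}\sum_{j_1,j_2,j_3,j_4} M_{j_1 j_2 j_3 j_4}\,\|\omega\|_\infty^4 = C\|\omega\|_\infty^4,
\end{equation*}
which is the desired bound.

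The only mildly subtle step is the moment step: deducing from four-fold continuous differentiability of $\widehat{K}$ that $K$ has a finite fourth absolute moment (and hence that differentiation under the integral is justified and the formulas for the partial derivatives of $\widehat{K}$ at all points $\xi$ hold). This is a standard fact in probability theory (for instance via iterating the symmetric-difference characterization of the derivative of a characteristic function together with Fatou's lemma), and once invoked the remainder of the argument is a routine Taylor estimate.
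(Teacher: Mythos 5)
Your proof is correct, but it takes a genuinely different route from the paper's. You obtain a \emph{global} uniform bound on the fourth-order partials of $\widehat{K}$: from four-fold differentiability of the characteristic function you deduce (via the symmetric-difference/Fatou argument) that $K$ has finite fourth absolute moments, differentiate under the integral, and conclude $|\partial_{j_1}\partial_{j_2}\partial_{j_3}\partial_{j_4}\widehat{K}(\xi)| \le \int |x_{j_1}x_{j_2}x_{j_3}x_{j_4}|K(x)\,dx$ for \emph{every} $\xi$, so a single application of the Lagrange remainder gives $|R(\omega)| \le C\|\omega\|_\infty^4$ for all $\omega$ at once. The paper avoids the moment argument entirely: it uses only continuity of the fourth derivatives to bound them on the compact set $\{\|\omega\|_\infty \le 1\}$ and applies the Taylor remainder there, then handles $\|\omega\|_\infty > 1$ by the crude estimate $|R(\omega)| \le |\widehat{K}(\omega)| + 1 + |\omega^T\Sigma_K\omega|/2 \le 2 + D\|\omega\|_\infty^2 \le (2+D)\|\omega\|_\infty^4$. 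Your route buys a cleaner one-step argument and an explicit constant expressed through moments of $K$, at the cost of invoking the standard but nontrivial fact that existence of even-order derivatives of a characteristic function at $0$ forces finiteness of the corresponding moments (and, in the multivariate setting, a H\"older step to control the mixed absolute moments); the paper's two-case argument is more elementary, needing only $|\widehat{K}|\le 1$ and compactness. One point worth stating explicitly in your write-up, which the paper does note: the mean-value form of the multivariate Taylor remainder requires $\widehat{K}$ to be real-valued, which holds here precisely because $K$ is symmetric --- the same symmetry you use to kill the first- and third-order terms --- so nothing in your argument breaks, but the justification should be recorded.
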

\begin{proof}
We bound the remainder term $R(\omega)$ by considering two cases.
\begin{enumerate}
\item $\{\omega:||\omega||_{\infty} \leq 1\}$: Since $\widehat{K}$ is four times continuously differentiable, there exists $D$ such that for any $\{j:\sum_{k=1}^p j_k=4\}$, $\forall \, ||\omega||_{\infty} \leq 1$
\begin{equation}
\label{eq:partial_bound}
\frac{\partial^4 \widehat{K}}{\partial \omega_1^{j_1},\ldots,\partial \omega_p^{j_p}} (\omega) < D.
\end{equation}
Using the mean value form of the Taylor remainder we have (see e.g. Theorem 7.1 in \cite{edwards1973advanced} on page 131)
\begin{equation*}
R(\omega) = \sum_{\{j:\sum_{k=1}^p j_k = 4\}} \frac{\partial^4 \widehat{K}}{\partial \omega_1^{j_1},\ldots, \partial \omega_p^{j_p}} (\xi) \prod_{k=1}^p \frac{\omega_k^{j_k}}{j_k!}.
\end{equation*}
for some $\xi = t \omega$ for $t \in [0,1]$.  Using Equation \eqref{eq:partial_bound} and noting $\prod_{k=1}^p \omega_k^{j_k} \leq ||\omega||_{\infty}^4$, for some $C$ we have
\begin{equation*}
|R(\omega)| \leq C||\omega||_{\infty}^4.
\end{equation*}
\item $\{\omega:||\omega||_{\infty} > 1\}$: Note that for some $D$, $\frac{\omega^T\Sigma_K \omega}{2} \leq D||\omega||_{\infty}^2$. Also note that on the set $||\omega||_{\infty} > 1$ we have $||\omega||_{\infty}^2 \leq ||\omega||_{\infty}^4$. We have
\begin{align*}
|R(\omega)| &= \left|\widehat{K}(\omega) - 1 + \frac{\omega^T\Sigma_K \omega}{2}\right|\\
&\leq |\widehat{K}(\omega)| + |1| + \left|\frac{\omega^T\Sigma_K \omega}{2}\right|\\
&\leq 2 + |\frac{\omega^T\Sigma_K \omega}{2}|\\
&\leq 2 + D||\omega||_{\infty}^2\\
&\leq 2||\omega||_{\infty}^2 + D||\omega||_{\infty}^2\\
&\leq (2 + D)||\omega||_{\infty}^4
\end{align*}
\end{enumerate}
\end{proof}

\section{Technical Notes}
\label{sec:technical_notes}

\subsection{Full Bandwidth Matrix Optimization}
\label{sec:full_bandwidth}
In Theorem \ref{thm:multi_dim} on \page \pageref{thm:multi_dim}, the $\MISE$ (using a full bandwidth matrix) is
\begin{equation*}
\frac{1}{n}\int d\nu(\omega) + \left(\frac{1}{4} \int (\omega^TS\omega)^2d\mu(\omega) - \frac{1}{n}\int (\omega^T S \omega) d\nu(\omega)\right)(1 + O(||H||_{\infty}^2))
\end{equation*}
where $S = H^T \Sigma_K H$. Using $\vc$ notation and the identity $\vc(EFG) = (G^T \otimes E) \vc(F)$ where $\otimes$ denotes Kronecker product (see Equation 5 on page 67 in \cite{henderson1979vec}), we write the optimization problem for $S$ as
\begin{equation}
\label{eq:full_opt}
S_Y^* = \argmin{S \succeq 0}  \vc(S)^T B \vc(S)   -\frac{1}{n} \vc(S)^T V\\
\end{equation}
where
\begin{align*}
&B = \frac{1}{4}\int (\omega \otimes \omega)(\omega \otimes \omega)^Td\mu(\omega),\\
&V = \int (\omega \otimes \omega) d\nu(\omega).
\end{align*}
It is important to note that $B$ and $V$ cannot be computed from the data because they depend on the unknown function $\widehat{f}_X(\omega)$. In practice we could use plug--in estimators to approximate these integrals.

The unconstrained solution to optimization problem \eqref{eq:full_opt} may not be positive semidefinite, so we cannot omit the $S \succeq 0$ constraint and use a quadratic solver (see Subsection \ref{sec:diagonal} for an example). Also note that one cannot analytically solve the unconstrained version of optimization problem \eqref{eq:full_opt} and then check whether the resulting $S_Y^*$ is positive semidefinite. In other words, the following procedure is not valid:
\begin{align*}
&g(\vc(S)) \equiv \vc(S)^T B \vc(S) -\frac{1}{n} \vc(S)^T V,\\
\implies&\nabla g(\vc(S)) = 2B\vc(S) -\frac{1}{n}  V.
\end{align*}
Setting the gradient equal to $0$ and solving we have
\begin{equation*}
\vc(S_Y^*) =  \frac{1}{2n}B^{-1}V.
\end{equation*}
One could then check whether $S_Y^* \succeq 0$. This procedure is not valid because $B$ is not invertible. To see that $B$ is not invertible, note that the vector $(\omega \otimes \omega)$ has $p^2$ elements, but not $p^2$ unique elements. For example when $p=2$, $(\omega \otimes \omega) = (\omega_1^4,\omega_1\omega_2,\omega_1\omega_2,\omega_2^2)^T$. When the $j$th and $k$th elements of $(\omega \otimes \omega)$ are equal, the $j$th and $k$th rows of $(\omega \otimes \omega) (\omega \otimes \omega)^T$ are equal. Thus at least two rows of $B\equiv\int (\omega \otimes \omega)(\omega \otimes \omega)^Td\mu(\omega)$ are equal, implying that $B$ cannot be inverted.

\subsection{Diagonal Bandwidth and $\Sigma_K = I$}
\label{sec:diagonal}

In Equation \eqref{eq:full_bandwidth_mise} the full bandwidth matrix asymptotic expansion of the $\MISE$ was presented. By restricting the kernel to have $\Sigma_K = I$ and the bandwidth matrix to be diagonal we achieve considerable simplification of the $\MISE$. Let $h_i = H_{ii}$ and $\hsq = (h_1^2, \ldots, h_p^2)$. The $\MISE$ (Equation \eqref{eq:full_bandwidth_mise}) becomes
\begin{equation*}
(2\pi)^p \MISE(\hsq) = \frac{1}{n}\int d\nu(\omega) + \left(\hsq^TB\hsq - \frac{1}{n}\hsq^T V\right)(1 + O(||\hsq||_{\infty})),
\end{equation*}
where
\noindent
\begin{align*}
&B_{i,j} = \frac{1}{4}\int \omega_i^2\omega_j^2d\mu(\omega),\\
&V_i = \int \omega_i^2 d\nu(\omega).
\end{align*}
We seek the $\hsq$ which minimizes the larger order terms in the $\MISE$ expression. In other words we seek
\begin{equation}
\label{eq:diag_optimal}
\hsq^*_Y = \argmin{\hsq \geq 0} \left( \hsq^TB\hsq - \frac{1}{n}\hsq^TV\right).
\end{equation}
$B$ is positive definite so the expression is strictly convex and there is a unique solution. Enforcing the domain restriction $\hsq \geq 0$ (elementwise) is necessary: even in simple cases, the unconstrained optimum $\frac{1}{2n}B^{-1}V$ may have elements less than $0$. In the following paragraphs we work through an example where $f_X$ and $f_{\epsilon}$ are bivariate independent normals with $\epsilon$ having small variance along one direction. The kernel is normal with identity covariance. The normality is not essential for this example, but makes the computations simpler.


We begin by showing that the optimal bandwidth matrix is diagonal, implying that optimizing over the full bandwidth matrix and the diagonal matrix are equivalent. We then show that when optimizing over the unconstrained diagonal matrix, the direction in which $\epsilon$ has larger variance yields a ``negative squared bandwidth''. Consider:
\begin{align*}
&f_X \sim N(0,I_{2 \times 2}),\\
&f_{\epsilon} \sim N(0,
\begin{bmatrix}
\sigma_1^2 & 0 \\
0 & \sigma_2^2 \\
\end{bmatrix}),\\
&\Sigma_K \equiv \int xx^TK(x)dx = I{2 \times 2}.
\end{align*}
We parameterize the bandwidth matrix using $ H = 
\left[\begin{array}{cc}
h_{11} & h_{12}\\
h_{12} & h_{22}
\end{array}\right]
$. First consider optimizing over the entire bandwidth matrix, Equation \eqref{eq:full_opt}.  In our case
\begin{align*}
&S \equiv H^T \Sigma_K H = H^TH,\\
&B = \int (\omega \otimes \omega)(\omega \otimes \omega)^Td\mu(\omega) = \int \begin{bmatrix}\omega_1^4 & \omega_1^3\omega_2 & \omega_1^3\omega_2 & \omega_1^2\omega_2^2 \\
\omega_1^3\omega_2 & \omega_1^2\omega_2^2 & \omega_1^2\omega_2^2 &\omega_1\omega_2^3 \\
\omega_1^3\omega_2 & \omega_1^2\omega_2^2 & \omega_1^2\omega_2^2 &\omega_1\omega_2^3 \\
\omega_1^2\omega_2^2 & \omega_1\omega_2^3 & \omega_1\omega_2^3 &\omega_2^4
\end{bmatrix}d\mu(\omega),\\
&V =\int (\omega \otimes \omega) d\nu(\omega) = \int \begin{bmatrix}
\omega_1^2 \\
\omega_1\omega_2 \\
\omega_1\omega_2 \\
\omega_2^2
\end{bmatrix} d\nu(\omega).
\end{align*}
So Equation \eqref{eq:full_opt} becomes
\begin{align*}
&\vc(H^TH)^T\int\begin{bmatrix}\omega_1^4 & \omega_1^3\omega_2 & \omega_1^3\omega_2 & \omega_1^2\omega_2^2 \\
\omega_1^3\omega_2 & \omega_1^2\omega_2^2 & \omega_1^2\omega_2^2 &\omega_1\omega_2^3 \\
\omega_1^3\omega_2 & \omega_1^2\omega_2^2 & \omega_1^2\omega_2^2 &\omega_1\omega_2^3 \\
\omega_1^2\omega_2^2 & \omega_1\omega_2^3 & \omega_1\omega_2^3 &\omega_2^4
\end{bmatrix}d\mu(\omega) \vc(H^TH) \\
-&\frac{1}{n}  \vc(H^TH)^T \left(\int \begin{bmatrix}
\omega_1^2 \\
\omega_1\omega_2 \\
\omega_1\omega_2 \\
\omega_2^2
\end{bmatrix} d\nu(\omega)\right).
\end{align*}
The integration causes those terms involving odd powers of $\omega_i$ to be 0 by independence and symmetry of $d\nu(\omega)$ and $d\mu(\omega)$. Additionally the center $\omega_1^2\omega_2^2$ terms are moved outside the main expression and into the third term. We have
\begin{align*}
& \vc(H^TH)^T\int\begin{bmatrix}\omega_1^4 & 0 & 0 & \omega_1^2\omega_2^2 \\
0 & 0 & 0 & 0 \\
0 & 0 & 0 & 0 \\
\omega_1^2\omega_2^2 & 0 & 0 &\omega_2^4
\end{bmatrix}d\mu(\omega) \vc(H^TH) \\
&- \frac{1}{n} \vc(H^TH)^T\left(\int \begin{bmatrix}
\omega_1^2 \\
0 \\
0 \\
\omega_2^2
\end{bmatrix} d\nu(\omega)\right)
+ 4 (h_{12}(h_{11} + h_{22}))^2\int \omega_1^2 \omega_2^2 d\mu(\omega).
\end{align*}
Since 
\begin{equation*}
H^TH = \left[\begin{array}{cc}
h_{11}^2 + h_{12}^2 & h_{12}(h_{11} + h_{22})\\
h_{12}(h_{11} + h_{22}) & h_{12}^2 + h_{22}^2
\end{array}\right],
\end{equation*}
minimization of the first two terms depends on $(h_{11}^2 + h_{12}^2,h_{22}^2 + h_{12}^2)$. So by setting $h_{12}=0$ we make the third term in the expression 0, without restricting minimization of the first two terms. Thus for the general bandwidth matrix the minimum occurs when the off-diagonal elements are 0.

Now let $\hsq =(h_{11}^2,h_{22}^2)$. We study the diagonal optimization problem \eqref{eq:diag_optimal}
\begin{equation*}
\hsq_Y^* = \min_{\hsq} \hsq^TB'\hsq - \frac{1}{n}\hsq^TV',
\end{equation*}
where
\noindent
\begin{align*}
&B'_{i,j} = \frac{1}{4}\int \omega_i^2\omega_j^2d\mu(\omega) = \frac{1}{4}\int \omega_i^2\omega_j^2|\widehat{f}_X(\omega)|^2 |\widehat{f}_{\epsilon}(\omega)|^2d\omega,\\
&V'_i = \int \omega_i^2 d\nu(\omega) = \int \omega_i^2|\widehat{f}_{\epsilon}(\omega)|^2 d\omega  - \int \omega_i^2 |\widehat{f}_{X}(\omega)|^2 |\widehat{f}_{\epsilon}(\omega)|^2 d\omega.
\end{align*}
With no restrictions on $\hsq$ the optimum is
\begin{equation*}
\hsq_Y^* = \frac{1}{2n}B'^{-1}V'.
\end{equation*}
We now compute this quantity for the given densities. First compute $B'$:
\begin{align*}
4B'_{11} &= \int \omega_1^4|\widehat{f}_{X_1}(\omega_1)|^2 |\widehat{f}_{\epsilon_1}(\omega_1)|^2d\omega_1  \int |\widehat{f}_{X_2}(\omega_2)|^2 |\widehat{f}_{\epsilon_2}(\omega_2)|^2d\omega_2 \\
&= \left(\frac{3}{4} \sqrt{\frac{\pi}{(1 + \sigma_1^2)^5}}\right)\left(\sqrt{\frac{\pi}{1 + \sigma_2^2}}\right),\\
4B'_{22} &= \int \omega_2^4|\widehat{f}_{X_2}(\omega_2)|^2 |\widehat{f}_{\epsilon_2}(\omega_2)|^2d\omega_2  \int |\widehat{f}_{X_1}(\omega_1)|^2 |\widehat{f}_{\epsilon_1}(\omega_1)|^2d\omega_1 \\
&= \left(\frac{3}{4} \sqrt{\frac{\pi}{(1 + \sigma_2^2)^5}}\right)\left(\sqrt{\frac{\pi}{1 + \sigma_1^2}}\right),\\
4B'_{12} &= \int \omega_1^2|\widehat{f}_{X_1}(\omega_1)|^2 |\widehat{f}_{\epsilon_1}(\omega_1)|^2d\omega_1  \int \omega_2^2|\widehat{f}_{X_2}(\omega_2)|^2 |\widehat{f}_{\epsilon_2}(\omega_2)|^2d\omega_2\\ 
&= \left(\frac{1}{2} \sqrt{\frac{\pi}{(1 + \sigma_1^2)^3}}\right) \left(\frac{1}{2} \sqrt{\frac{\pi}{(1 + \sigma_2^2)^3}}\right).
\end{align*}
Since $B'$ and $B'^{-1}$ are symmetric, we write only the upper triangle:
\begin{equation*}
B' = \frac{\pi}{16}\begin{bmatrix}
3 \frac{1}{\sqrt{(1 + \sigma_1^2)^5(1 + \sigma_2^2)}} & \frac{1}{\sqrt{(1 + \sigma_1^2)^3(1 + \sigma_2^2)^3}} \\
 & 3 \frac{1}{\sqrt{(1 + \sigma_2^2)^5(1 + \sigma_1^2)}} \\
\end{bmatrix}.
\end{equation*}
Taking the inverse we obtain
\begin{align*}
B'^{-1} &= \frac{2(1+\sigma_1^2)^3(1+\sigma_2^2)^3}{\pi}\begin{bmatrix}
3 \frac{1}{\sqrt{(1 + \sigma_2^2)^5(1 + \sigma_1^2)}} & -\frac{1}{\sqrt{(1 + \sigma_1^2)^3(1 + \sigma_2^2)^3}} \\
 & 3 \frac{1}{\sqrt{(1 + \sigma_1^2)^5(1 + \sigma_2^2)}} \\
\end{bmatrix}\\
&=\frac{2}{\pi}\begin{bmatrix}
3 \sqrt{(1 + \sigma_2^2)(1 + \sigma_1^2)^5} & -\sqrt{(1 + \sigma_1^2)^3(1 + \sigma_2^2)^3} \\
 & 3 \sqrt{(1 + \sigma_1^2)(1 + \sigma_2^2)^5} \\
\end{bmatrix}.
\end{align*}
For $V'$ we have
\begin{align*}
V' &= \frac{\pi}{2}\left(\begin{bmatrix}
\sigma_1^{-3}\sigma_2^{-1} \\
\sigma_1^{-1}\sigma_2^{-3} \\
\end{bmatrix} - 
\begin{bmatrix}
\frac{1}{\sqrt{(1 + \sigma_1^2)^{3}(1 + \sigma_2^2)}} \\
\frac{1}{\sqrt{(1 + \sigma_2^2)^{3}(1 + \sigma_1^2)}} \\
\end{bmatrix}\right)\\
&= \frac{\pi}{2\sigma_2^3}\left(\begin{bmatrix}
0\\
\sigma_1^{-1} \\
\end{bmatrix} + \sigma_2^{2}
\begin{bmatrix} 
\sigma_1^{-3} \\
0 \\
\end{bmatrix} - \sigma_2^3
\begin{bmatrix}
\frac{1}{\sqrt{(1 + \sigma_1^2)^{3}(1 + \sigma_2^2)}} \\
\frac{1}{\sqrt{(1 + \sigma_2^2)^{3}(1 + \sigma_1^2)}} \\
\end{bmatrix}\right).
\end{align*}
So the optimal $\hsq$ is
\begin{align*}
\hsq_Y^* &= \frac{1}{2n}B'^{-1}V' \\
&= \frac{1}{2n\sigma_2^3}\bigg(
\begin{bmatrix}
-\sigma_1^{-1} \sqrt{(1 + \sigma_1^2)^3(1 + \sigma_2^2)^3}\\
3 \sigma_1^{-1} \sqrt{(1 + \sigma_1^2)(1 + \sigma_2^2)^5}
\end{bmatrix}\\
&+ \sigma_2^2
\begin{bmatrix}
3 \sigma_1^{-3} \sqrt{(1 + \sigma_2^2)(1 + \sigma_1^2)^5}\\
-\sigma_1^{-3} \sqrt{(1 + \sigma_1^2)^3(1 + \sigma_2^2)^3}
\end{bmatrix}
- 2\sigma_2^3
\begin{bmatrix}
1 + \sigma_1^2\\
1 + \sigma_2^2
\end{bmatrix}\bigg).
\end{align*}
For $\sigma_2$ close to 0 and small relative to $\sigma_1$ this quantity is approximately
\begin{equation}
\label{eq:hoptapprox}
\hsq_Y^* \approx \frac{1}{2n\sigma_1\sigma_2^3}\left(
\begin{bmatrix}
-\sqrt{(1 + \sigma_1^2)^3}\\
3 \sqrt{(1 + \sigma_1^2)}
\end{bmatrix}\right).
\end{equation}
The unconstrained optimization results in an $\hsq_Y^*$ with negative elements.

\end{document}